\pgfplotsset{compat=1.18}
\newmdenv[linecolor=black,backgroundcolor=gray!10,linewidth=1pt]{mybox}
\newcommand{\E}{\mathbb{E}}
\newcommand{\derivative}[2]{\frac{\partial {#1}}{\partial {#2}}}
\newcommand{\bmat}[1]{\left[\begin{array}{#1}}
\newcommand{\emat}{\end{array}\right]}
\newtheorem{corollary}{Corollary}
\newtheorem{definition}{Definition}
\newtheorem{example}{Example}
\newtheorem{lemma}{Lemma}
\newtheorem{proposition}{Proposition}
\newtheorem{assumption}{Assumption}
\newenvironment{proof}[1][Proof]{\noindent\textbf{#1.} }{\ \rule{0.5em}{0.5em}}
\begin{document}
\begin{titlepage}

    \title{A Theory of Likes}

    \date{\today}

    \author{Jean-Michel Benkert and Armin Schmutzler\thanks{Benkert: Department of Economics, University of Bern, jean-michel.benkert@unibe.ch. Schmutzler: Department of Economics, University of Zurich, and CEPR, armin.schmutzler@econ.uzh.ch. This paper was previously circulated under the title ``A Theory of Recommendations''. We are grateful for very helpful discussions with Dirk Bergemann, Piotr Dworczak, Johannes Johnen, Igor Letina, Simon Loertscher, Matt Mitchell, Marc M\"{o}ller, Nick Netzer, Volker Nocke, Martin Peitz, Marcus Reisinger, Egor Starkov, Jakub Steiner and seminar participants in Bamberg, Bern, Fribourg, Lausanne, Zurich, at MaCCI (2025), SAET (2023), Swiss Society of Economics of Statistics (2024), Swiss IO Day (2024) and Verein für Socialpolitik (2024).}}
    \vspace{1cm}
    \maketitle
    \begin{abstract}
       We study binary, non-strategic ratings when consumer preferences reflect a combination of objective quality differences between products and idiosyncratic taste. A rating then decomposes into objective and subjective content. Rating systems that are extremely strict or extremely lenient only convey objective content, maximizing consumer welfare when the population is not biased toward a particular product. With a biased population, less extreme rating systems that convey subjective content can be valuable. In many environments, a sales-maximizing platform prefers lenient rating systems, creating a conflict with welfare maximization.
    \end{abstract}

    \emph{Keywords:} recommendations, ratings, preference heterogeneity, information design, platform design, value maximization
    
    \emph{JEL Classification:} D02, D47, D83


    \end{titlepage}
    
\newpage 
\onehalfspacing
    \section{Introduction}

Simple peer-to-peer ratings play an essential role for the diffusion of economic information in diverse domains, most notably in the digital economy. In some ways, this is surprising. For example, consider a consumer deciding whether to ``like'' a restaurant after a meal. Her experience may depend on at least two dimensions, such as food quality and service quality, each of which can be good or bad. Consumers agree that good food is better than bad food and good service is better than bad service, but they may differ in how much weight they place on each dimension. The consumer leaving the rating has little incentive to think hard about how her like will be interpreted, while the consumer receiving the rating may rely on it to avoid a poor match. Moreover, even a thoughtful consumer cannot convey all relevant nuances through a coarse binary signal. The example illustrates three core features of  peer-to-peer rating: \textit{Products and preferences are heterogeneous}, \textit{stakes are asymmetric} for those who leave ratings and those who read them, and \textit{ratings are inherently coarse}.

In this paper, we study how such ratings can nonetheless provide value. We first show how coarse rating systems can be designed in such a way that a like (dislike) conveys objectively good (bad) news about a product. We then argue that, from a consumer welfare perspective, it need not be optimal to make use of this option. The subjectivity of ratings is not always a bug to be eliminated, but can be the feature that makes a rating system valuable: If the population is known to be biased toward particular products, it may be desirable to design the system in such a way that consumers disagree about whether a like is good or bad news about a product. Consumers with majority taste will then understand that they should buy such a product, those with minority tastes understand that they should avoid it. Having identified the optimal design of the rating system from a consumer perspective, we then find that a sales-maximizing platform will often deviate from this prescription, relying too often on excessively lenient rating systems.

Our model captures the three core features of the example in a parsimonious way. An agent must decide between two \emph{products}, which can come in different variants. There are  objectively good (bad) variants, for which all consumers agree that they are better (worse) than all alternatives. Moreover, there are two controversial variants with intermediate values, which consumers rank differently. Ex ante, consumers merely have distributional information on which variant they are facing. However, one of the  products comes with a rating from a previous consumer: This \emph{sender} mechanically gives a like for a product that is objectively good or at least sufficiently good from her perspective (with payoffs at least some exogenous rating standard $R$). The \emph{receiver} of the signal knows the preference distribution and therefore how likely it is that his preferences are aligned with those of the sender. Finally, based on this information, he decides whether to be obedient, that is, behave according to the prescription of the rating.

A transparent cut-off condition determines whether a receiver is \textit{obedient} or \textit{disobedient}. Importantly, the condition is the same for likes and dislikes. The key insight is that ratings contain both objective and subjective content. From a (dis-)like, a receiver learns that the product is not objectively bad (good), generating a payoff effect that pushes every receiver toward obedience. A like usually also raises the probability that, if the product is controversial, it is of the variant preferred by the majority. For consumers with majority tastes, these two pieces of information are both good news, so that they are obedient. Consumers with minority tastes face opposing effects and are obedient only when the objective effect outweighs the subjective one, and disobedient otherwise. 

Our main contributions concern the design of rating systems. Platforms often try to influence how critical consumer ratings are, for example through guidelines or instructions.\footnote{For example, Uber employs a five-star system but informs riders that ``5 stars means the ride went without issues,'' while ``4 stars or less'' prompt requests for feedback \citep{Uber2024}. Similarly, Airbnb communicates that a 5-star rating means ``everything was as expected'' \citep{Hospitable2024}.} We capture this by assuming that the platform controls the rating standard $R$ above which a like is given. Our first contribution is that we characterize the value of rating systems for consumers and identify the value-maximizing choice of $R$. When the population is not biased toward any of the controversial products on average, the subjective effect of ratings cancels out in the aggregate, and extreme rating standards are optimal: Likes should be reserved for objectively good products, or dislikes for bad ones. When the population is biased, ratings that convey subjective content due to an interior rating standard may be strictly preferable. Intuitively, ratings can provide value in two ways. First, they can push consumers toward objectively good products and help them avoid objectively bad ones. Second, they can help consumers obtain a product variant that fits their tastes, even when it is not perfect. These two objectives need not be conflicting: Consumers with majority tastes will benefit in both ways from being obedient. Instead, for consumers with minority tastes, being obedient reduces the chance of getting the preferred controversial product variant. Being disobedient reduces the chance of getting the objectively good rather than the objectively bad variant. Either way, the rating system creates value for these consumers, even if they do the opposite of what they are told.

Our second main contribution concerns the potential conflict between consumer welfare and sales maximization by a platform that benefits from transactions of the rated product, for instance through transaction fees or advertising revenue tied to traffic. We find that a sales-maximizing platform usually favors a low $R$, issuing likes for everything but objectively bad products. By contrast, the value-maximizing $R$ can be high or interior. 

We provide several further results. First, we introduce prices and show that the firm's pricing decision affects not only demand but also the informativeness of the rating. Second, we analyze the case where receivers observe multiple independent ratings. Perhaps surprisingly, access to more signals does not necessarily increase value or sales. Third, we allow the sender and receiver populations to differ, showing that our main results remain robust while new design considerations arise regarding the composition of the sender pool. We close by examining how polarization and the prevalence of controversial products shape the value of the rating system.

\subsection{Related Literature}\label{sec:related_literature}

Our paper relates most directly to work distinguishing vertical quality from horizontal fit in online reviews. \cite{sun2011variance} argues that variance of ratings does not reflect pure noise but conveys information about taste differences. \cite{chen2021reviews} theoretically and empirically separate uncertainty about quality from uncertainty about fit, finding that online ratings resolve the former more effectively than the latter. Closest in spirit is \cite{lafky2024ratings}, who show experimentally that raters evaluate products through their own preferences and consumers interpret ratings accordingly. \cite{bohren2025beyond} provide complementary experimental evidence, showing that ratings raise welfare under vertical  but not under horizontal differentiation, while segmenting ratings by consumer type restores the welfare gains.\footnote{\cite{bohren2025beyond} isolate the pure vertical and pure horizontal cases, while we study environments in which both dimensions coexist within a single product space.}

In a broad sense, our static inference problem belongs to the literature on observational learning.\footnote{ However, this literature, e.g. \citet{banerjee1992simple} \citet{bikhchandani1992theory} and \citet{smith2000pathological}, focuses on dynamics, and it emphasizes learning from the decisions of others rather than ratings.} A more closely related literature deals specifically with learning from online reviews. For instance, \cite{acemoglu2022learning} show that consumers with strong prior preferences are more likely to purchase and review, so the pool of recommenders is endogenously selected and receivers must account for this bias. \cite{che2018recommender} study the tension between informativeness and exploration in recommender systems. We study a single binary signal in isolation, which makes the rating standard the sole design instrument, enabling us to characterize precisely how it trades off objective against subjective content.

Our design problem is naturally framed as one of constrained information design: The platform cannot choose an arbitrary signal structure \citep{kamenica2011bayesian}, only the rating standard at which senders convert their experience into a like. This can be justified using results on the optimality of coarse disclosure in certification and rating design, including partial information revelation \citep{rayo2010optimal}, monopolistic pass/fail certification \citep{lizzeri1999information}, and optimal partition of quality into rating tiers \citep{hopenhayn2026optimal}. Unlike this prior work, our signal is generated mechanically from the experiences of heterogeneous senders, and the central trade-off is not more versus less information but objective versus subjective content.\footnote{Strategic senders can produce coarse messages endogenously, see, e.g., \cite{chakraborty2010cheap} and \cite{smirnov2024designing}. Our senders are non-strategic, so coarseness is imposed by the binary rating rule rather than chosen.}

Our analysis of the tradeoff between value and sales maximization connects to the literature on platform economics and rating bias, where platform incentives often diverge from consumer welfare. The most closely related papers in this area also work with heterogeneous tastes. \cite{shi2020economic} study \textit{which} product a platform should feature, whereas we study \textit{how demanding} the standard for a positive recommendation should be. \cite{PeitzSobolev2024} show that a profit-maximizing firm may optimally recommend products even to poorly matched consumers.  A literature on reputation and recommendation bias \citep{tadelis2016reputation, gaudin2022streaming} documents how feedback systems can be distorted by strategic behavior.\footnote{For instance, \cite{bolton2013engineering} show how the threat of retaliation in reciprocal feedback systems leads buyers to withhold negative feedback.} By contrast, the wedge between platforms and consumers that we identify does not rest on fake reviews or inflated individual messages. Ratings are mechanically truthful, and the distortion arises because a sales-maximizing platform chooses a more permissive rating standard than a value-maximizing one.

Section~\ref{sec:model} introduces the model. Section~\ref{sec:preliminary} characterizes receiver behavior. Section~\ref{sec:value_design} studies the value of rating systems, first characterizing the sources of value and then analyzing the optimal design of the rating standard. Section~\ref{sec:sales} turns to the parallel question of sales maximization, where a platform seeks to maximize transaction volume rather than consumer welfare. Section~\ref{sec:extensions} presents further results. Section~\ref{sec:conclusion} concludes. All proofs are relegated to the appendix.

\section{Model}\label{sec:model}

\paragraph{Products and Payoffs}
   
We consider a consumer's choice between two products. Each product can be of one of four variants: $g,b,c_1$ and $c_2$. The probability of each variant is independently drawn from the same distribution $\mathbf{q} = (q_g, q_1, q_2, q_b)$ where
    \begin{equation*}
        \Pr(g)= q_g>0,  \Pr(b)=q_b>0, \Pr(c_1) = q_1>0, \Pr(c_2)= q_2>0.    
    \end{equation*}
Consumers cannot observe the product variant before consumption, but they know the distribution. Consumer types are distributed on $[-1/2, 1/2]$ according to $F$, which is absolutely continuous with full support. All consumers obtain a payoff of $1$ from consuming variant $g$ and a payoff of $0$ from consuming variant $b$. The payoff from consuming variant $c_1$ is $1/2+i$, the payoff from variant $c_2$ is $1/2-i$. Therefore, consumers agree that product variants $g$ and $b$ are (objectively) good and bad, leading to the maximal (minimal) payoff. By contrast, the payoff from $c_1$ and $c_2$ depends on type $i$, so that the assessment of these variants is subjective. We thus refer to them as \textit{controversial}. Types $i>0$ prefer $c_1$ to $c_2$, and vice versa for $i<0$, while type $i=0$ is indifferent between the two variants.

We use the following self-explanatory terminology:
    \begin{definition}
        The \textbf{prevalence of controversial products} is given as $Q:=\frac{q_1+q_2}{2}$. The \textbf{odds of a good product} are $\sigma:= q_g/q_b$.
    \end{definition}

\paragraph{A Motivating Example}
To motivate the payoff structure, one might assume that a product variant has two quality dimensions $d \in \{1,2\}$ that are relevant to consumers. The quality level in each dimension $d$ can take values $\omega_d \in \{0,1\}$. All consumers agree that high quality in any dimension is desirable, but they put weight $\frac{1}{2}+i$ on dimension $1$ and weight $\frac{1}{2}-i$ on dimension $2$. Defining $c_i$ as a product variant with quality $1$ in dimension $i$ and quality $0$ in dimension $j \ne i$, the above payoff structure obtains.
    
\paragraph{Ratings and Updating}
Before choosing products, a consumer receives a rating for one of the two products from a previous consumer. This \textit{sender} (she) and the \textit{receiver} (he) are both drawn from the distribution $F$. Senders and receivers who consume the same product face the same variant and differ only in their type-dependent evaluation of controversial products. The receiver chooses between the product with a rating $r$, which can be a like ($r=L$) or a dislike ($r=D$), and the alternative without ($r=0$). 

To eliminate any scope for strategic sender behavior, we assume that a sender mechanically gives a like if she obtained a payoff of at least $R\in (0,1)$ and a dislike otherwise.\footnote{We exclude the rating standard $R=0$ as this would render the rating uninformative. For symmetry, we also exclude $R=1$, but all results would essentially go through if we allowed $R=1$.} Intuitively, a like conveys the following information to the receiver: The probability of the bad product variant falls to zero, the likelihood of the good variant relative to the controversial variants rises, and the odds between the controversial variants change.

For the objectively good and bad variants $g$ and $b$, there is \textit{sender agreement}: All types like the former and dislike the latter. For the controversial variants, sender behavior is type-dependent: After consuming $c_1$, a sender of type $i$ gives a like if and only if $i \geq R-1/2$, whereas types $i \leq 1/2-R$ give a like after consuming $c_2$; see Figure~\ref{fig:senders}. A rating has \textit{objective content} on which all senders agree, and \textit{subjective content} on which they do not. The rating standard determines the mix between objective and subjective content: For interior $R$, sender disagreement on controversial products is preserved and ratings carry nontrivial subjective content. As $R\to 0$ or $R\to 1$, sender disagreement vanishes and ratings become purely objective in content.

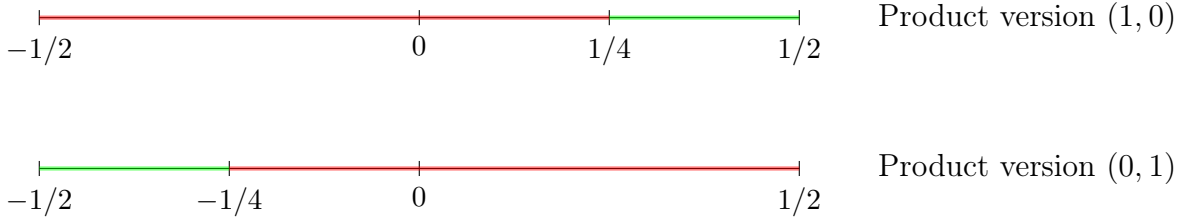
\begin{figure}[htb]
\centering
\caption{Illustration of Sender Behavior}
\label{fig:senders}
    \begin{tikzpicture}[scale=10]
    \draw (-0.5,0.2) -- (0.5,0.2);
    \foreach \x in {-1/2, 0, 1/4, 1/2}
        \draw (\x,0.19) -- (\x,0.21) node[below=0.2cm, font=\small] {$\x$};
        
    \node at (0.8,0.2) {Product variant $c_1$};
    \draw[ultra thick, green, opacity=0.5] (1/4,0.2) -- (1/2,0.2);
    \draw[ultra thick, red, opacity=0.5] (-0.5,0.2) -- (1/4,0.2);

    \draw (-0.5,0) -- (0.5,0);
    \foreach \x in {-1/2, -1/4, 0, 1/2}
        \draw (\x,-0.01) -- (\x,0.01) node[below=0.2cm, font=\small] {$\x$};
        
    \node at (0.8,0) {Product variant $c_2$};
    \draw[ultra thick, green, opacity=0.5] (-1/2,0) -- (-1/4,0);
    \draw[ultra thick, red, opacity=0.5] (-1/4,0) -- (0.5,0);
\end{tikzpicture}

\flushleft
\footnotesize \textit{Notes:} The figure illustrates what types send a like (green) and a dislike (red) for the two controversial product variants when $R=3/4$.
\end{figure}
       
Let $\phi_1(R):= 1-F(R-1/2)$ and $\phi_2(R):= F(1/2-R)$ denote the probability of a like conditional on the product variant being $c_1$ or $c_2$, respectively. The probability that a randomly chosen sender gives a like or dislike, respectively, is 
    \begin{align*}
       \pi^L(R) &:=  q_g+q_1\phi_1(R)+q_2\phi_2(R), \\
       \pi^D(R) &:= 1-\pi^L(R)= q_1(1-\phi_1(R))+q_2(1-\phi_2(R))+ q_b.
    \end{align*}
 In words, the probability of receiving a like from a randomly chosen sender is given by the sum of the probability that a product is objectively good and the probability that it is subjectively good enough from a randomly chosen sender's perspective. We assume that receivers are Bayesian. In Appendix \ref{Sec:App_Posteriors}, we present the receiver's posterior $\mathbf{p}^r(R):= (p_g^r(R), p_1^r(R),p_2^r(R),p_b^r(R))$ after a like $r=L$ or a dislike $r=D$.\footnote{To simplify notation, we will often drop the dependence of these posteriors and other variables on $R$.}

 Finally, we use the following summary terminology:
\begin{definition} \label{def_RS}
    $\mathcal{E}:=(\textbf{q},F)$ defines a \textbf{decision environment}. A \textbf{rating system} $\mathcal{R}$ consists of a decision environment $\mathcal{E}$ and a rating standard $R \in (0,1)$.
\end{definition}
Throughout, the decision environment $\mathcal{E}$ and the rating standard $R$ are common knowledge among all agents.

\section{Receiver Behavior} 
\label{sec:preliminary}\label{sec:receiver_behavior}

The receiver must decide whether to follow the prescription of the rating, that is, whether to purchase the rated product after a like and avoid it after a dislike. We now characterize the receiver's optimal choice, depending on his type $i$ and the rating system $\mathcal{R}$. 

The expected payoffs from buying the product with a rating and the alternative without are respectively given as
\begin{align*}
        U^r(i)&:=p_g^r+(1/2+i)p_1^r  + (1/2-i)p_2^r, \quad r\in \{L,D\};\\
   U^0(i) &:=q_g+ (1/2+i)q_1 +(1/2-i)q_2 .
    \end{align*}
Receiver $i$ buys the product rated with a like if and only if $U^L(i) \geq U^0(i)$ or equivalently
 \begin{align}
    \frac{p_g^L-q_g}{2} -\frac{p_b^L-q_b}{2} + i[(p_1^L-q_1) -(p_2^L-q_2) ] \geq  0\label{eq:optimal_following}. 
    \end{align}

The receiver follows the prescription of dislike ratings if and only if $U^0(i) \geq U^D(i)$. A rating's informational impact on the receiver decomposes into an objective and a subjective payoff effect. To make this precise:
     \begin{definition}\label{def:effects} For a rating $r \in \{D,L\}$, let $\Delta_O^r$ and $-i\Delta_S^r$ denote its objective and subjective payoff effects, respectively, where 
          \begin{align*}\label{eq:objective}
       \Delta_O^r&:=\frac{p_g^r-q_g}{2} -\frac{p_b^r-q_b}{2},\\
        \Delta_S^r&:=(p_2^r-q_2) -(p_1^r-q_1)  .
     \end{align*}
    \end{definition}
If $\Delta_O^r > 0$, the rating $r$ has shifted probability from the objectively bad to the objectively good variant, a payoff gain shared by all types. $\Delta_S^r$ measures the analogous shift between the controversial variants: from $c_1$ to $c_2$ when positive, and conversely when negative. Whether this shift is good news depends on the receiver: The sign of the subjective payoff effect, $-i\Delta_S^r$, depends on the signs of $i$ and $\Delta_S^r$, so that a shift toward $c_2$ is welcome only for types who prefer $c_2$ to $c_1$. Applying this notation to condition~\eqref{eq:optimal_following}, we obtain:

 \begin{proposition}\label{prop:equivalence_following} For a given rating system $\mathcal{R}$, a receiver of type $i$ buys the rated product after a like and does not buy it after a dislike if and only if
\begin{equation}\label{eq:optimal_following_short}
  \Delta_O^L \geq i\Delta_S^L.   
\end{equation}
    \end{proposition}
    
Intuitively, a receiver follows a prescription only if he expects the rating to come from a sender whose preferences are sufficiently aligned with his. Condition \eqref{eq:optimal_following_short} formalizes the resulting trade-off via the objective and subjective payoff effects. Importantly, this condition is the same for likes and dislikes, which motivates the following terminology:

\begin{definition}$ $
\begin{enumerate}
   \item[(i)] A receiver is \textbf{obedient} if he buys the rated product after a like but not after a dislike, \textbf{disobedient} if he buys it after a dislike but not after a like.
    \item[(ii)] Receivers are \textbf{universally obedient} in decision environment $\mathcal{E}$ if all receivers are obedient for every $R\in(0,1)$.
\end{enumerate}
    \end{definition}
Denoting the type who is indifferent between being obedient or disobedient as
\begin{align*}
         \tilde{i}:=& \frac{\Delta_O^L}{\Delta_S^L} \text{ if } \Delta_S^L \neq 0\text{,}
    \end{align*}
Proposition \ref{prop:equivalence_following} immediately implies the following result:
\begin{corollary}\label{cor:follow} Fix a rating system $\mathcal{R}$.
   \begin{enumerate}
            \item[(i)] If $\vert\Delta_S^L \vert\leq 2\Delta_O^L$, then all $i \in [-1/2,1/2]$ are obedient.
            \item[(ii)] If $\Delta_S^L< -2\Delta_O^L$, then all $i \geq  \tilde{i}$ are obedient.
            \item[(iii)]  If $\Delta_S^L> 2\Delta_O^L$, then all $i \leq  \tilde{i}$ are obedient. 
        \end{enumerate}  
\end{corollary}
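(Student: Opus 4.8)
The plan is to reduce everything to the single acceptance inequality supplied by Proposition~\ref{prop:equivalence_following}(ii): a type $i$ accepts the recommendation if and only if $\Delta_O^B \geq i\,\Delta_S^B$. Since $\Delta_O^B$ is the objective effect, it is nonnegative --- indeed positive, as a buy recommendation rules out the bad product and raises the weight on the good one (this is the content of the discussion preceding the corollary, and follows from $\mathbf{q}>0$) --- so I may treat $\Delta_O^B \geq 0$ as given. The corollary then follows from a clean case split on the sign of $\Delta_S^B$, combined with a comparison of the indifferent type $\tilde{i} = \Delta_O^B / \Delta_S^B$ against the endpoints $\pm 1/2$ of the type space.

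First I would dispose of $\Delta_S^B = 0$: the acceptance inequality collapses to $\Delta_O^B \geq 0$, which holds for every $i$, so all types accept; this is consistent with case (i), whose hypothesis $\vert\Delta_S^B\vert \leq 2\Delta_O^B$ is satisfied. Next, for $\Delta_S^B > 0$ I would divide the acceptance inequality by $\Delta_S^B$ (preserving its direction) to obtain the equivalent condition $i \leq \tilde{i}$. Comparing $\tilde{i}$ with $1/2$: if $\tilde{i} \geq 1/2$ then every $i \in [-1/2,1/2]$ satisfies $i \leq \tilde{i}$ and all types accept, and here $\tilde{i} \geq 1/2 \Leftrightarrow \Delta_O^B \geq \tfrac12 \Delta_S^B \Leftrightarrow \Delta_S^B \leq 2\Delta_O^B$, placing this under case (i); otherwise $\tilde{i} < 1/2$, i.e.\ $\Delta_S^B > 2\Delta_O^B$, which is exactly case (iii), with accepting set $\{i \leq \tilde{i}\}$ as claimed.

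Symmetrically, for $\Delta_S^B < 0$ I would divide by $\Delta_S^B$, now reversing the inequality, to get the equivalent condition $i \geq \tilde{i}$. The relevant comparison is now against the lower endpoint $-1/2$: if $\tilde{i} \leq -1/2$ every admissible $i$ accepts, and (again multiplying by the negative quantity $\Delta_S^B$ and reversing) $\tilde{i} \leq -1/2 \Leftrightarrow \Delta_O^B \geq -\tfrac12\Delta_S^B \Leftrightarrow \vert\Delta_S^B\vert \leq 2\Delta_O^B$, giving case (i); otherwise $\tilde{i} > -1/2$, i.e.\ $\Delta_S^B < -2\Delta_O^B$, which is case (ii) with accepting set $\{i \geq \tilde{i}\}$. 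I would then close by noting that the three hypotheses $\vert\Delta_S^B\vert \leq 2\Delta_O^B$, $\Delta_S^B < -2\Delta_O^B$, and $\Delta_S^B > 2\Delta_O^B$ partition the range of $\Delta_S^B$ (using $\Delta_O^B \geq 0$), so the cases are exhaustive and non-overlapping; and since $\tilde{i} \geq 0$ when $\Delta_S^B > 0$ and $\tilde{i} \leq 0$ when $\Delta_S^B < 0$, the cutoffs in (ii) and (iii) always lie strictly inside $[-1/2,1/2]$ and are therefore meaningful.

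The computations are elementary; the only genuine care needed is in tracking the direction of the inequality when dividing the acceptance condition by $\Delta_S^B$, whose sign varies across cases, and in verifying that the endpoint comparisons $\tilde{i} \gtrless \pm 1/2$ translate into precisely the stated thresholds relating $\Delta_S^B$ to $2\Delta_O^B$. That bookkeeping, rather than any substantive difficulty, is the main thing to get right.
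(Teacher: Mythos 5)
Your proof is correct and matches the paper's intended argument: the paper explicitly omits the proof of Corollary~\ref{cor:follow} as straightforward from Proposition~\ref{prop:equivalence_following}(ii), and your route---using $\Delta_O^B \geq 0$, splitting on the sign of $\Delta_S^B$ with the inequality reversal when dividing by a negative $\Delta_S^B$, and comparing $\tilde{i}$ to the endpoints $\pm 1/2$---is precisely that argument. Your closing checks (exhaustiveness of the three cases given $\Delta_O^B \geq 0$, and $\tilde{i}$ lying inside $[-1/2,1/2]$ in cases (ii) and (iii)) are accurate and complete the bookkeeping.
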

Intuitively, in case (i), the objective effect is so large that it dominates even the subjective effect of the most extreme receivers. As $\Delta_S^L$ and hence the subjective effect increases, receivers who are not sufficiently aligned with the population at large will be disobedient; see cases (ii) and (iii).
The interplay between objective and subjective effect leads to a stark result for extreme rating standards:
\begin{proposition} \label{prop:extreme_recommendations}
        Suppose $R\rightarrow 1$ or $R\rightarrow 0$. Then, all types are obedient irrespective of the decision environment $\mathcal{E}$. 
    \end{proposition}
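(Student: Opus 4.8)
The plan is to lean on the equivalence in Proposition~\ref{prop:equivalence_following}(i): since a receiver accepts a buy recommendation if and only if he accepts a don't-buy recommendation, it suffices in each polar case to exhibit acceptance of whichever recommendation becomes fully revealing. Concretely, I would first determine the limits of the conditional buy probabilities $\phi_1(R)=1-F(R-1/2)$ and $\phi_2(R)=F(1/2-R)$, since these drive all the posteriors. Because $F$ is continuous with full support on $[-1/2,1/2]$, we have $F(-1/2)=0$ and $F(1/2)=1$, so as $R\to 1$ both arguments force $\phi_1,\phi_2\to 0$, whereas as $R\to 0$ both force $\phi_1,\phi_2\to 1$.

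For $R\to 1$, substituting $\phi_1,\phi_2\to 0$ into $\mathbf{p}^B$ gives $p_H^B\to 1$ and $p_1^B,p_2^B\to 0$, i.e. the buy recommendation becomes fully revealing of a good product. Hence $U_i^B\to 1$ uniformly in $i$, as $p_H^B\le U_i^B\le 1$. Since $\mathbf{q}>0$ forces $q_H<1$, the no-recommendation payoff satisfies $U_i^0\le q_H+\max\{q_1,q_2\}=1-q_L-\min\{q_1,q_2\}<1$ for every $i$, so for $R$ close enough to $1$ we get $U_i^B>U_i^0$ for all $i$; thus every type accepts the buy recommendation, and by Proposition~\ref{prop:equivalence_following}(i) also the don't-buy recommendation.

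The case $R\to 0$ is symmetric. With $\phi_1,\phi_2\to 1$, the posterior $\mathbf{p}^D$ satisfies $p_L^D\to 1$, so a don't-buy recommendation becomes fully revealing of a bad product and $U_i^D\to 0$ uniformly, since $0\le U_i^D\le 1-p_L^D$. As $q_H>0$ gives $U_i^0\ge q_H+\min\{q_1,q_2\}>0$ for every $i$, we obtain $U_i^D<U_i^0$ for $R$ close to $0$, so every type accepts the don't-buy recommendation, and again by Proposition~\ref{prop:equivalence_following}(i) the buy recommendation as well.

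The only delicate point is conceptual rather than computational: one might expect the argument to proceed by showing that the subjective effect $\Delta_S^B$ vanishes, but it does \emph{not}---in the limit $R\to 1$ it tends to $q_1-q_2$, which is nonzero whenever $q_1\neq q_2$. The full-revelation route above sidesteps this. Alternatively, one can invoke Corollary~\ref{cor:follow}(i) and verify $2\Delta_O^B\ge|\Delta_S^B|$ directly in each limit; there the work reduces to the triangle inequality $|q_1-q_2|\le q_1+q_2$ together with the nonnegative slack contributed by $q_L$ (respectively $q_H$), which is exactly what makes the conclusion hold irrespective of the product distribution. I would also take care that the limits are one-sided ($R\to 1^-$ and $R\to 0^+$) and rely on continuity of $F$ at the endpoints to justify passing $\phi_1,\phi_2$ to their limits.
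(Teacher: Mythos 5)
Your proposal is correct, and it reaches the conclusion by a genuinely different (if closely related) route. The paper verifies the type-$i$ acceptance inequality \eqref{eq:optimal_following} directly in each limit: for $R\rightarrow 1$ it computes $p_1^B,p_2^B\rightarrow 0$, $p_H^B\rightarrow 1$ and checks the most stringent type $i=\pm 1/2$, reducing to $1\geq q_1+q_H$; for $R\rightarrow 0$ it computes the \emph{non-degenerate} buy posterior $p_s^B\rightarrow q_s/(1-q_L)$ and reduces to $q_H+q_1\geq 0$. You instead exploit Proposition~\ref{prop:equivalence_following}(i) so that in each polar case you only argue about the recommendation that becomes fully revealing---buy at $R\rightarrow 1$, don't-buy at $R\rightarrow 0$---and you close with uniform bounds: $U_i^B\geq p_H^B\rightarrow 1$ against $U_i^0\leq 1-q_L-\min\{q_1,q_2\}<1$, and $U_i^D\leq 1-p_L^D\rightarrow 0$ against $U_i^0\geq q_H+\min\{q_1,q_2\}>0$ (all bounds check out, using $\mathbf{q}>0$ and continuity of $F$ at $\pm 1/2$). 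This buys symmetry between the two limits, avoids computing the conditional-on-not-bad posterior at $R\rightarrow 0$ entirely, and makes uniformity in $i$ explicit, so no case analysis over the sign of $q_1-q_2$ or the endpoints $i=\pm 1/2$ is needed; the paper's computation, in exchange, produces the exact limiting acceptance inequalities, which is the form implicitly reused elsewhere (e.g., the claim $\tilde{i}>1/2$ for $R$ near $0$ or $1$ in the proof of Proposition~\ref{prop:asymmetric_interior}). Your closing observation is also accurate and worth keeping: $\Delta_S^B\rightarrow q_1-q_2\neq 0$ in general as $R\rightarrow 1$, so the text's informal statement that the subjective component ``vanishes'' refers to informational content rather than to $\Delta_S^B$ itself, and your alternative check of $\vert\Delta_S^B\vert\leq 2\Delta_O^B$ via Corollary~\ref{cor:follow}(i) amounts in the limits to $\vert q_1-q_2\vert\leq q_1+q_2+2q_L$ (respectively $\vert q_1-q_2\vert\leq q_1+q_2+2q_H$), which is exactly the inequality to which the paper's endpoint checks reduce.
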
    
In the limits, ratings have purely objective content. For instance, as $R \to 1$, a like guarantees the product is objectively good, while a dislike implies it is not. 

\section{Value of Rating Systems}
\label{sec:value_design}

This section takes a welfare perspective, studying the value of rating systems for receivers and deriving the value-maximizing choice of the rating standard.

\subsection{Characterization of the Value}

\label{sec:value_RS}
Without a rating system, all product variants are identical ex ante, with an expected payoff based on the prior distribution of product variants:
    \begin{align}\label{eq_VNR}
      V^0(i):=U^0(i).
    \end{align}
If one of the products comes with a rating, an obedient receiver $i$ chooses this product (rather than the alternative) only if it comes with a like. The expected payoff is thus
    \begin{align}\label{eq_VF}
        V^+(i) := \pi^LU^L(i)+ (1-\pi^L)U^0(i).
    \end{align}
The first term on the right-hand side comes from likes, the second one from dislikes.

A disobedient receiver will buy a product that comes with a dislike. Otherwise, he will buy the alternative. Hence, his expected payoff is
    \begin{align}\label{eq_VNF}
       V^-(i) :=&\pi^LU^0(i)+ (1-\pi^L)U^D(i).  
    \end{align}

A like does not create value for disobedient receivers, who choose the alternative without a rating. Instead, a dislike, which is given with probability $(1-\pi^L)$, influences their behavior: Rather than buying an alternative based on priors, updating leads them to choose the product for which they received the explicit recommendation not to buy. 

We now define the value of a rating system.\footnote{This definition is in line with the definition of a valuable signal in \citet{kamenica2011bayesian}.}
\begin{definition} \label{def:value_RS}
    $V(\mathcal{R})$, the \textbf{value of a rating system }$\mathcal{R}= (\mathcal{E},R)$, is the increase in expected payoffs resulting from the existence of a rating standard $R$ in a decision environment $\mathcal{E}$, where the expectation is taken over all pairs of independently drawn senders and receivers.
\end{definition}
Using Definition \ref{def:value_RS} and Corollary~\ref{cor:follow}, we can characterize $V(\mathcal{R})$ as follows\vspace{2mm}:

    \begin{tabular}{lll}
    (i) & $\int_{-1/2}^{1/2} (V^+(i) - V^0(i))dF(i)$  & if  $ \vert\Delta_S^L \vert\leq 2\Delta_O^L$; \\
    (ii) & $\int_{-1/2}^{\tilde{i}}V^-(i) dF(i)  +\int_{\tilde{i}}^{1/2} V^+(i) dF(i) - \int_{-1/2}^{1/2} V^0(i)dF(i)$ & if  $\Delta_S^L<-2\Delta_O^L$; \\
    (iii) & $\int_{-1/2}^{\tilde{i}}V^+(i) dF(i) +  \int_{\tilde{i}}^{1/2} V^-(i) dF(i)  - \int_{-1/2}^{1/2} V^0(i)dF(i)$ & if  $\Delta_S^L>2\Delta_O^L$.
\end{tabular}\\

\noindent All receivers are obedient when $ \vert\Delta_S^L \vert\leq 2\Delta_O^L$, as the objective effect dominates (case (i)). Hence, by equation~\eqref{eq_VF}, the rating system creates value only for a receiver who observes a like, as a dislike leads to the same expected payoff as without a rating system. Parts (ii) and (iii) refer to cases where the subjective effect of the messages matters more. Then, not all receivers are obedient and, as argued above, the system can create value for receivers that are obedient as well as those who are not.

We restate the value $V(\mathcal{R})$ to better understand how it reflects the role of the objective and subjective payoff effects of ratings:\footnote{As we focus on $R$ as the design parameter, we often write $V(R)$ instead of $V(\mathcal{R})$.}

\begin{proposition} \label{prop:value_of_RS}
       The value $V(R)$ of the rating system $\mathcal{R}$ is:\\
\begin{small}
\medskip
    \begin{tabular}{lll}
    (i) & $\pi^L[\Delta_O^L-\Delta_S^L \E[i]]$  & if  $ \vert\Delta_S^L \vert\leq 2\Delta_O^L$; \\
    (ii) & $(1-\pi^L)F(\tilde{i}) \left[\Delta_O^D  - \Delta_S^D\E[i \mid i\leq \tilde{i}]\right] + \pi^L(1- F(\tilde{i}))\left[\Delta_O^L - \Delta_S^L  \E[i \mid i\geq \tilde{i}]\right]$ & if  $\Delta_S^L<-2\Delta_O^L$; \\
    (iii) & $\pi^LF(\tilde{i})\left[\Delta_O^L  - \Delta_S^L \E[i \mid i\leq \tilde{i}]\right]+ (1-\pi^L)(1- F(\tilde{i})) \left[\Delta_O^D - \Delta_S^D \E[i \mid i\geq \tilde{i}]\right]$ & if  $\Delta_S^L>2\Delta_O^L$.
\end{tabular}
\end{small}
    \end{proposition}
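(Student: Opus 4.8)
The plan is to reduce all three cases to a single algebraic identity for the per-type payoff gain. The first step is to compute $U_i^r - U_i^0$ for $r \in \{B,D\}$. Substituting the definitions \eqref{eq:expected_payoff_positive} and \eqref{eq:expected_payoff_random} and separating the $i$-independent terms from those multiplying $i$, I would verify that
\begin{align*}
U_i^r - U_i^0 = \Delta_O^r - i\,\Delta_S^r.
\end{align*}
This is precisely the content of Definition~\ref{def:effects}: the objective effect is the constant part of the payoff gain and the subjective effect is (minus) its slope in $i$. Concretely, the coefficient of $1$ is $(p_H^r-q_H)+\tfrac12(p_1^r-q_1)+\tfrac12(p_2^r-q_2)=\Delta_O^r$, while the coefficient of $i$ is $(p_1^r-q_1)-(p_2^r-q_2)=-\Delta_S^r$. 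This identity is the engine for everything that follows.

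The second step translates this into the two relevant welfare differences. From \eqref{eq_VF} and \eqref{eq_VNR}, an accepting receiver gains $V_A(i)-V_0(i)=\pi^B(U_i^B-U_i^0)=\pi^B(\Delta_O^B-i\Delta_S^B)$, because the don't-buy branch of $V_A$ coincides with $V_0$ and cancels. Symmetrically, from \eqref{eq_VNF}, a non-accepting receiver gains $V_N(i)-V_0(i)=(1-\pi^B)(U_i^D-U_i^0)=(1-\pi^B)(\Delta_O^D-i\Delta_S^D)$, since now the buy branch of $V_N$ cancels against $V_0$. Both differences are affine in $i$, which is what makes the subsequent integration clean.

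The third step handles each case by splitting the $\int V_0\,dF$ term in the value formula so that it pairs off against $V_A$ or $V_N$ on the appropriate subinterval. For case (i) all types accept, so I integrate $\pi^B(\Delta_O^B-i\Delta_S^B)$ over $[-1/2,1/2]$ and use $\int i\,dF(i)=\E[i]$ to obtain $\pi^B[\Delta_O^B-\Delta_S^B\E[i]]$. For case (ii), where types $i\geq\tilde i$ accept, I rewrite the value as $\int_{-1/2}^{\tilde i}(V_N-V_0)\,dF+\int_{\tilde i}^{1/2}(V_A-V_0)\,dF$, insert the two affine expressions, and then apply the elementary identities $\int_{-1/2}^{\tilde i}dF=F(\tilde i)$, $\int_{\tilde i}^{1/2}dF=1-F(\tilde i)$, together with $\int_{-1/2}^{\tilde i}i\,dF(i)=F(\tilde i)\,\E[i\mid i\leq\tilde i]$ and $\int_{\tilde i}^{1/2}i\,dF(i)=(1-F(\tilde i))\,\E[i\mid i\geq\tilde i]$, to recover the stated expression. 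Case (iii) is identical after swapping the accept and reject regions, so that $\Delta_O^B$ now multiplies the lower segment $[-1/2,\tilde i]$ and $\Delta_O^D$ the upper one.

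There is no genuine obstacle beyond bookkeeping: the only point requiring real care is the first identity, where one must correctly match the weights $1/2\pm i$ against the coefficients in the definitions of $\Delta_O^r$ and $\Delta_S^r$ so that the entire type-dependence collapses into the single term $-i\Delta_S^r$. Once that identity is in hand, the decomposition of $V_0$ across the acceptance threshold $\tilde i$ and the conversion of the truncated integrals into conditional expectations are purely mechanical.
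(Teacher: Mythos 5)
Your proposal is correct and takes essentially the same route as the paper's proof: the paper likewise substitutes \eqref{eq_VNR}, \eqref{eq_VF}, and \eqref{eq_VNF}, splits $\int V_0\,dF$ across the threshold $\tilde{i}$ in cases (ii) and (iii), and identifies $\Delta_O^r$ and $-i\Delta_S^r$ inside the integrands before converting the truncated integrals into $F(\tilde{i})$, $1-F(\tilde{i})$, and conditional expectations. Your identity $U_i^r-U_i^0=\Delta_O^r-i\,\Delta_S^r$ merely packages up front the same algebra the paper carries out under its braces, so there is no substantive difference.
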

The positive payoff effect of the rating system in case (i) consists of the type-independent objective effect $\Delta_O^L$ and the expected subjective effect $-\Delta_S^L\mathbb{E}[i]$. As $\Delta_S^L$ increases in absolute terms, some receivers become disobedient. For instance, when $\Delta_S^L > 2\Delta_O^L$, a like shifts beliefs toward the controversial product variant that is less preferred by types $i>0$, who may thus become disobedient. Only the  fraction $F(\tilde{i})$ of the population whose preferences are sufficiently aligned with the sender’s ($i<\tilde{i}$) still buy. For them, the subjective effect is either positive or not sufficiently negative to dominate the positive objective effect. The average subjective payoff effect among the obedient types is $-\Delta^L_S\E[i \mid i \leq \tilde{i}]$. As argued above, the fraction $1-F(\tilde{i})$ of disobedient types may benefit from the rating. They infer from a dislike that the product is likely to be of the controversial variant they personally prefer and thus buy despite or rather because of the negative signal.\footnote{This is reflected in the term $(1-\pi^L)(1- F(\tilde{i})) \left[\Delta_O^D - \Delta_S^D \E[i \mid i\geq \tilde{i}]\right]$ in Proposition \ref{prop:value_of_RS}(iii).}

\subsection{Design} \label{sec:value_design_sub}

The design problem is to choose a rating standard $R\in(0,1)$.\footnote{Throughout, we formulate the design problem on the open interval $R\in(0,1)$. Statements favoring extreme standards should therefore be understood as boundary statements, i.e.\ as limits or maxima of the continuous extension at $R=0$ or $R=1$.} Extreme standards yield sender agreement and thus purely objective content, while interior standards preserve disagreement and admit subjective content. We organize the analysis along this dichotomy. We first show that, when the population is symmetric, the aggregate contribution of the subjective effect cancels out, so design is governed by the objective effect alone and extreme standards are generically optimal. We next show that, when the population is asymmetric, the subjective effect no longer cancels out in the aggregate, and balancing it against the objective effect yields an interior optimum under simple general conditions on the decision environment. Finally, we present a tractable parametric example in which both regimes appear and the optimal rating standard can be characterized in closed form.

\subsubsection{Symmetric populations}\label{sec:Sym}

We begin with symmetric populations, so that preference heterogeneity corresponds to pure noise on average.
\begin{assumption} \label{ass:symmetry}
    $F(-i) = 1-F(i)$ \textit{for all} $i\in [-1/2, 1/2]$.
\end{assumption}
\noindent Under symmetry, the share $1-F(R-1/2)$ of senders who give a like after consuming $c_1$ is the same as the share  $F(1/2-R)$ who give a like after consuming $c_2$, and we define $\beta:=F(1/2-R)=1-F(R-1/2)$. Symmetry implies that the subjective effect of ratings cancels out in the aggregate. From Proposition \ref{prop:value_of_RS}, we thus obtain:
\begin{lemma}\label{prop:pop_symmetry_allfollow}
        For any rating system satisfying Assumption \ref{ass:symmetry}, all types are obedient and $ V(R) =\pi^L\Delta_O^L$.
    \end{lemma}
The optimal rating standard then depends only on $\sigma$, the odds of a good product:
  \begin{proposition} \label{prop:solution_symmetric_population}
        For any rating system satisfying Assumption \ref{ass:symmetry}, $V(R)$ is
        \begin{enumerate}
            \item[(i)] decreasing in $R$ if $\sigma<1$;
            \item[(ii)] increasing in $R$ if $\sigma>1$;
            \item[(iii)] independent of $R$ if $\sigma=1$.
        \end{enumerate}
    \end{proposition}
Intuitively, since the subjective effect of ratings cancels out in the aggregate, the designer can focus on the objective effect. To this end, he either chooses a very low rating standard close to $0$ that reveals whether a product is objectively bad or not, or a very high rating standard close to $1$ that reveals whether a product is objectively good or not. Which choice is better depends on how important these two pieces of information are: When objectively bad goods are common ($\sigma<1$), it is important to avoid them, so an extremely low rating standard is appropriate; when objectively good products are common ($\sigma>1$), it is important to identify them through extremely high rating standards.

\subsubsection{Asymmetric populations}\label{sec:Asym}

We now turn to asymmetric populations. To capture biased populations in a tractable way, we introduce the following weak assumption.

\begin{assumption}
\label{ass:full_support_bounded_density}
$F$ has mean $\mu:=\E[i]\neq 0$.
Moreover, its density $f$ extends continuously to $[-\tfrac12,\tfrac12]$ and satisfies
$0<f(\pm\tfrac12)<\infty$.
\end{assumption}
We use a shorthand notation for the net payoff of the average receiver from consuming each controversial variant rather than taking the outside option:
\begin{align*}
    A_1&:= \left(\tfrac12+\mu\right)-U^0(\mu),\\
    A_2&:= \left(\tfrac12-\mu\right)-U^0(\mu).
\end{align*}
Here, $\tfrac12+\mu$ and $\tfrac12-\mu$ are the gross payoffs of type $\mu$ from variants $c_1$ and $c_2$, respectively, while $U^0(\mu)$ is her expected payoff from the unrated alternative. We obtain a sufficient condition for an interior rating standard to be optimal:

\begin{proposition}
\label{prop:asymmetric_interior_unequal_q}
For any rating system satisfying Assumption~\ref{ass:full_support_bounded_density}, if
\begin{equation}\label{eq:asymmetric_endpoint_condition}
   q_1A_1 f(-\tfrac12)+q_2A_2 f(\tfrac12)<0<q_1A_1 f(\tfrac12)+q_2A_2 f(-\tfrac12)
\end{equation}
then there exists an optimal $R^\ast\in(0,1)$.
\end{proposition}

Condition~\eqref{eq:asymmetric_endpoint_condition} is clearly inconsistent with Assumption~\ref{ass:symmetry}. Moreover, it requires that $A_1$ and $A_2$ have different signs. W.l.o.g., we assume that  $A_1>0>A_2$, meaning that variant $c_1$ is valuable for the average receiver relative to the outside option, whereas variant $c_2$ is not. Under this interpretation, condition \eqref{eq:asymmetric_endpoint_condition}
has a transparent meaning. The first inequality guarantees that rating standards near $R=0$ are dominated by slightly more stringent standards. Starting from an extremely lenient standard, a small increase in $R$ removes only likes from senders who have strong preferences for one of the controversial products and thus hardly value the other one. For variant $c_1$, these likes come from types near $-1/2$; for variant $c_2$, they come from types near $1/2$. The inequality states that these weakest marginal likes have negative average value: They are too heavily tilted toward recommendations for the controversial variant that is bad from a majority perspective.\footnote{The fact that senders near $1/2$  switch from like to dislike after observing variant $c_2$ is desirable from the perspective of the average receiver, the corresponding switch for senders near $-1/2$ after observing variant $c_1$ is not. The left inequality in \eqref{eq:asymmetric_endpoint_condition} guarantees that the beneficial effect of eliminating undesirable likes dominates the adverse effect of eliminating desirable likes. }  Hence, an extremely lenient standard is too permissive. By similar arguments, the second inequality guarantees that extremely stringent rating standards are not desirable.

\subsubsection{A tractable example}\label{sec:Bench}

The general results of Sections~\ref{sec:Sym} and~\ref{sec:Asym} establish the qualitative dichotomy between symmetric and asymmetric populations.
To illustrate the general mechanism from above, we now provide an explicit closed-form solution for a tractable parametric specification.\footnote{Example~\ref{ass:asymmetry} is complementary to Assumption~\ref{ass:full_support_bounded_density} rather than nested in it: the power distributions \(F(i)=(i+1/2)^a\) have boundary densities that vanish or diverge when \(a\neq1\). A tractable example of a distribution that satisfies Assumption~\ref{ass:full_support_bounded_density} is given by tilted-uniform distributions.}

\begin{example}\label{ass:asymmetry}
   (i) $F(i) = (i+1/2)^a$ for $a>0$; (ii)  $q_1=q_2= Q$.
 \end{example}
The case $a=1$ reflects a symmetric, unbiased population. By contrast, when $a \neq 1$, the population is biased toward one product variant or the other.\footnote{The condition $q_1=q_2$ ensures that all subjective content in ratings is driven by the asymmetry of the type distribution $F$, not by prior differences in how common the controversial variants are.}

Our characterization focuses on decision environments where the prevalence of controversial products is so low that receivers are universally obedient. In the Online Appendix \ref{App:universal_acceptance_cutoff}, we show that this holds if $Q\leq\bar Q(a,\sigma)$ for a  cutoff $\bar Q(a,\sigma)\in(0,1/2]$ that is unique given $a$ and $\sigma$, the other features of the decision environment. The condition is frequently satisfied.\footnote{If $a=1$, $\bar Q(1,\sigma)=1/2$, so universal obedience arises for every admissible $Q$. For $a\neq 1$, the cutoff is still quite large (for example, $\bar Q(2,1)=1/3$). At $\sigma=1$ and $a>1$ the closed form $\bar Q(a,1)=1/[4(1-2^{-a})]$ obtains (Corollary \ref{Cor_Obedience_Ex} in the Online Appendix). For general $\sigma$, $\bar Q(a,\sigma)$ is also computable from the primitives.}
Moreover, in Lemma \ref{lemma:cutoff_monotonicity_a} of the Online Appendix, we show that this condition is easiest to satisfy in the unbiased benchmark $a=1$ and becomes harder to satisfy as the population becomes more biased. We now describe the relation between the rating standard $R$ and the value, focusing on the asymmetric case.

  \begin{proposition} \label{prop:benchmark_characterization}
  Suppose the decision environment $\mathcal{E}$ satisfies the conditions in Example~\ref{ass:asymmetry} as well as  $Q\leq\bar Q(a,\sigma)$ and $a\neq 1$.
  \begin{enumerate}
      \item[(i)] If $Q\geq \min\left\{\frac{a}{a+1},\frac{1}{a+1}\right\}$, then $V(R)$ is maximized at a unique
  $R^\ast\in(0,1)$.
      \item[(ii)] If $Q<\min\left\{\frac{a}{a+1},\frac{1}{a+1}\right\}$, then
      \begin{enumerate}
          \item[(a)] $V(R)$ is decreasing in $R$ if and only if
  $\sigma\leq\min\left\{\frac{a-Q(a+1)}{1-Q(a+1)},\frac{1-Q(a+1)}{a-Q(a+1)}\right\}$;
          \item[(b)] $V(R)$ is increasing in $R$ if and only if
  $\sigma\geq\max\left\{\frac{a-Q(a+1)}{1-Q(a+1)},\frac{1-Q(a+1)}{a-Q(a+1)}\right\}$;
          \item[(c)] $V(R)$ is maximized at a unique $R^\ast\in(0,1)$ otherwise.
      \end{enumerate}
  \end{enumerate}
  \end{proposition}

Proposition~\ref{prop:benchmark_characterization} first identifies $\min\left\{\frac{a}{a+1},\frac{1}{a+1}\right\}$ as a cutoff on the share of controversial products, above which an interior optimum exists. Below that cutoff, extreme rating standards are optimal only when the odds of a good product are sufficiently unbalanced.  Figure~\ref{fig:interior_solution} illustrates how the occurrence of interior solutions depends on the parameters: the further $a$ is away from $1$, the larger is the interval of $\sigma$ values for which an interior solution exists. 

 In some cases, the restriction \(Q\leq \bar Q(a,\sigma)\) is not needed. In Example~\ref{ass:asymmetry}, if \(\sigma=1\), so that \(q_g=q_b\), the design role of the objective channel is shut down. The rating standard matters only through the subjective content of the rating. One can show that this content is maximized at \(R=1/2\), where senders like precisely the controversial variant they personally prefer. Thus the midpoint standard makes ratings maximally informative about the controversial dimension.\footnote{This remains true even when \(Q>\bar Q(a,1)\) and some receivers are disobedient, as disobedience does not affect the amount of subjective content conveyed by the rating.}

\begin{figure}[htb]
    \centering
        \caption{Illustration of Proposition~ \ref{prop:benchmark_characterization} }
    \label{fig:interior_solution}
\begin{tikzpicture}
        \begin{axis}[
            width=12cm,
            xmin=0, xmax=9,
            ymin=0, ymax=9,
            xlabel={$a$},
            ylabel={$\sigma$},
            axis lines=middle,
            axis line style={->},
            xlabel style={
                at={(axis description cs:1,0)}, anchor=west, font=\footnotesize
            },
            ylabel style={
                at={(axis description cs:0,1)}, anchor= south , font=\footnotesize
            },
            unit vector ratio=2 1, 
            xtick={0,1, 3, ..., 9}, 
            ytick={0, 1, 3, ..., 9},
            tick label style={font=\footnotesize}
        ]
        \addplot[ thick, domain=0.06:9, samples=200, dotted,name path=upper]{(0.95*x - 0.05)/(0.95 - 0.05*x)};
        \addplot[ thick, domain=0.06:9, samples=200, dashed,name path=lower]{(0.95 - 0.05*x)/(0.95*x - 0.05)};
        \node[font=\tiny, rotate=-55] at (axis cs:0.4,1.8) {$R^\ast \in (0,1)$};
        \node[font=\tiny] at (axis cs:2.5,1.5) {$R^\ast \in (0,1)$};
        \node[font=\tiny] at (axis cs:1.1,3) {$R^\ast \rightarrow 1$};
        \node[font=\tiny] at (axis cs:1.1,0.4) {$R^\ast \rightarrow 0$};
    \end{axis}
\end{tikzpicture}
  \flushleft\footnotesize \textit{Notes:} This figure illustrates the parameter regimes of Proposition~\ref{prop:benchmark_characterization} for $Q=0.05$. The dotted and dashed curves correspond to $\sigma=(a-Q(a+1))/(1-Q(a+1))$ and its reciprocal, respectively. An interior solution arises between the two curves.
  Otherwise, extreme rating standards are optimal.
\end{figure}

\section{Sales Maximization}\label{sec:sales}

We now consider a platform that maximizes sales of the product with a rating rather than receiver welfare. It will turn out that there are parameter regions where the two objectives are aligned and others where they are in stark conflict.

\subsection{Characterization} \label{sec:demand}
Using Corollary~\ref{cor:follow}, the sales for the rated product $D(R)$ can be expressed analogously to the value of the rating system in Proposition \ref{prop:value_of_RS} as 

  \begin{tabular}{lll}
  (i)   & $\pi^L(R)$ & if  $\vert\Delta_S^L\vert\leq 2\Delta_O^L$; \\
  (ii)  & $\pi^L(R)F(\tilde{i}(R))+(1-\pi^L(R))(1-F(\tilde{i}(R)))$ & if  $\Delta_S^L>2\Delta_O^L$; \\
  (iii) & $\pi^L(R)(1-F(\tilde{i}(R)))+(1-\pi^L(R))F(\tilde{i}(R))$ & if  $\Delta_S^L<-2\Delta_O^L$,
  \end{tabular}
  \smallskip
  
\noindent where we have made explicit that the indifferent receiver $\tilde{i}$ and the probability of a like $\pi^L$ are functions of the rating standard $R$. Similar to the expression for the value in Proposition \ref{prop:value_of_RS}, the relation between $\Delta_S^L$ and $\Delta_O^L$ determines which types are obedient. In case (i), all types are obedient so that sales are given by $\pi^L(R)$, the probability of a like. In cases (ii) and (iii), sales come from both obedient and disobedient types. Importantly, disobedient types buy a product rated with a dislike, thus contributing to sales.

The structural parallel between $D(R)$ above and the value expression in Proposition~\ref{prop:value_of_RS} is instructive. The determinants of both expressions are the same ($\pi^L$, $\Delta_O^L$, $\Delta_S^L$ and $\tilde{i}$). The key difference lies in the aggregation across receivers. Value weights each rating by the resulting payoff effect for the receiver, while demand simply counts purchases. When all types are obedient, sales equal $\pi^L$, the frequency of likes, regardless of their objective effect on receivers. This difference drives the potential conflict between value and sales maximization.

\subsection{Design}\label{sec:sales_design}

Proceeding as in Section~\ref{sec:value_design_sub}, we first show that, for symmetric populations, sales are monotone in the rating standard. Then, we establish an analogous result that applies to asymmetric populations as well.

\begin{proposition}\label{prop:demand_symmetric}
    For a decision environment $\mathcal{E}$ satisfying Assumption~\ref{ass:symmetry}, demand $D(R)=\pi^L(R)$ is decreasing in $R$.
\end{proposition}
The simple form of $D(R)$ is implied by Lemma~\ref{prop:pop_symmetry_allfollow}, which states that receivers are universally obedient under Assumption~\ref{ass:symmetry}. This implies the result because the probability of likes is decreasing in $R$.\footnote{See Lemma~\ref{lemma:solution_symmetric_population} in the appendix.}

Comparison with Proposition~\ref{prop:solution_symmetric_population} highlights the potential for conflict between value and sales maximization. When $\sigma < 1$, value is also decreasing in $R$, so the two objectives are aligned and both call for a low standard. When $\sigma > 1$, however, value is increasing in $R$ while sales remain decreasing, so that the objectives are diametrically opposed: Sales maximization favors a low standard to make likes frequent, while value maximization favors a high standard to make likes informative.

Like value maximization, sales maximization becomes more complex for asymmetric population distributions. Nonetheless, we establish a sufficient condition for sales maximization to call for low rating standards, mirroring the result in the symmetric case.

\begin{proposition} \label{prop:demand_asymmetric}
Consider a decision environment $\mathcal{E}$ with $q_g\geq q_b$. Then, the continuous extension of $D(R)$ to $[0,1]$ is maximized at $R=0$.
\end{proposition}

The result holds without assumptions on the distribution $F$, but requires $q_g \geq q_b$. Intuitively, sales come from obedient types who received a like and disobedient types who received a dislike. Lowering $R$ enlarges the like channel and shrinks the dislike channel in any decision environment. The assumption $q_g \geq q_b$ ensures the first effect dominates, making $R=0$ optimal.
This contrasts sharply with value maximization. In Example~\ref{ass:asymmetry} with $q_g > q_b$, the condition for value to be decreasing in $R$ in Proposition~\ref{prop:benchmark_characterization} is never satisfied: Since good products are common, likes are frequently given for low $R$, and the emphasis shifts from frequency to informativeness. Increasing $R$ filters out controversial products, making likes more selective and thus more valuable.

When \(q_g<q_b\), the analytical argument for Proposition~8 no longer
extends. One sufficient condition under which demand is nonetheless highest for low rating standards is \textit{universal obedience}: 
Then demand coincides with the probability of a like, so lowering \(R\)
weakly increases sales regardless of whether \(q_g\) is larger or smaller
than \(q_b\). Thus, the failure of Proposition~8 in the unrestricted model
must rely on some receivers being \textit{disobedient}. This is not merely a
limitation of the proof: without further restrictions on \(F\), one can
construct absolutely continuous full-support type distributions for which the
continuous extension of \(D(R)\) is not maximized at \(R=0\). The reason is
that, when bad products are sufficiently common, dislikes are frequent. If
the type distribution is sufficiently biased, the subjective effect may
dominate the objective effect for many receivers. Demand can then be driven
by disobedient receivers rather than by the frequency of likes.

This possibility appears not to arise in the tractable specification of
Example \ref{ass:asymmetry}, in which \(q_g<q_b\) corresponds to \(\sigma<1\). An
extensive numerical search found no violations of monotonicity, so that sales remain
decreasing in \(R\). Thus, while
Proposition 8 is false in the unrestricted model without \(q_g\ge q_b\), the
low-standard conclusion appears robust within Example \ref{ass:asymmetry}.

\section{Further Results} \label{sec:extensions}

\subsection{Optimal Pricing}\label{sec:price_extension_sym}

We now introduce prices to highlight that, when ratings are based on net utility, the price of a product affects not only its sales but also the value of a rating. This connects to a literature on how prices and ratings interact.\footnote{\cite{JohnenNg2024} show that firms can strategically use pricing to harvest favorable ratings, \cite{PeitzSobolev2024} document rating inflation, and \cite{carnehl2024pricing} study dynamic pricing when prices directly shape the distribution of reviews.} Our purpose is not to study full-fledged pricing competition, but to isolate, in the simplest setup, how the firm's pricing choice feeds back into the value of the rating.

We consider a symmetric population and assume for simplicity that types are distributed uniformly.\footnote{Assuming the uniform distribution simplifies the analysis and admits closed-form solutions. Economically, the key assumption is symmetry as in Assumption \ref{ass:symmetry} to ensure that the two controversial variants generate likes with the same probability at a given rating standard.} To simplify the exposition further, we assume that the controversial variants are equally likely and set $Q=q_1=q_2$. Receivers choose between buying the product with a rating at price $P\ge 0$ and buying the product without a rating at a price normalized to be $P_0=0$. The rating standard $R\in(0,1)$ is fixed (e.g., by a platform), and the firm selling the product with a rating chooses $P$ to maximize profit $\Pi(P;R)=P\,D(P;R)$, where $D(P;R)$ denotes demand at price $P$ given standard $R$.
Ratings are based on \emph{net} utility. For instance, after consuming product variant $c_1$, a sender gives a like if and only if $1/2+i-P \ge R$, resulting in an effective rating standard $T:=R+P$. We work with the continuous extension of the rating rule to $T=1$ and can therefore consider $T\in(0,1]$, equivalently prices $P\in[0,1-R]$, without loss. 

The receiver buys the product with a like if and only if 
\begin{align*}
    U^L(R+P)-P\ge U^0 \Leftrightarrow\ P\le \Delta_O^L(R+P).
\end{align*}
Since $\Delta_S^r=0$, all types consider a like objectively good news and the dislike objectively bad news. After a dislike rating, the product with a rating is dominated by the outside option.\footnote{Formally,
$U^D(R+P)\le U^0$ and $P\ge 0$ imply $U^D(R+P)-P\le U^0$.} Moreover, either all types buy the product with a like at price $P$ or none do. Therefore, sales at price $P$ take the simple form
\begin{equation}\label{eq:demand_sym_price}
D(P;R)=\pi^L(R+P)\cdot \mathbf 1\!\left\{P\le \Delta_O^L(R+P)\right\}, 
\end{equation}
and the firm's profits are $\Pi(P;R) = P \cdot D(P;R).$ Intuitively, at a price of zero, all consumers who observe a like buy the rated product. By increasing the effective rating standard $T=R+P$ and thus decreasing $\pi^L(R+P)$, a higher price makes a like less likely. However, conditional on observing a like, the rating becomes more valuable because it is harder to obtain. This increases the receiver's willingness to pay after a like, so that $\Delta_O^L(R+P)$ is increasing in $P$.\footnote{This mechanism is related to the rating inflation studied in \cite{PeitzSobolev2024} and the price-rating dynamics in \cite{JohnenNg2024}, where firms strategically manipulate the informativeness of ratings through pricing.} In regions where the latter effect dominates, sales are increasing in $P$. They remain positive as long as $P\le \Delta_O^L(R+P)$.

Define $P^u(R)$ as the unrestricted maximizer of the objective $P\,\pi^L(R+P)$, ignoring the demand constraint $P\le \Delta_O^L(R+P)$. Denote the upper bound of the feasible price set as
$\bar P(R):=\max\left\{P\in[0,1-R]: P\le \Delta_O^L(R+P)\right\}$. We then obtain:

\begin{proposition}\label{prop:pricing}
Consider a decision environment $\mathcal{E}$ such that the type distribution $F$ is uniform, $q_1=q_2$ and $q_g>Q$.\footnote{The assumption $q_1=q_2$ eliminates the \emph{subjective} content of ratings and delivers the simple ``all-or-nothing'' demand in \eqref{eq:demand_sym_price}. If instead $q_1\neq q_2$, then a like typically also shifts beliefs \emph{between} the two controversial variants. Because receiver types value these two variants differently, the same rating is more attractive for some types than for others. Demand after $L$ then becomes \emph{type-dependent}, and potentially only a subset of consumers are obedient at a given price.} Then, for a fixed rating standard $R\in(0,1)$, the firm's profit-maximizing price is $P^*(R)=\min\left\{\,P^u(R),\,\bar P(R)\right\}.$
\end{proposition}

The above result also sheds light on the conflict between value and sales maximization.  A higher price raises the effective rating standard $T=R+P$ above $R$. By Proposition~\ref{prop:solution_symmetric_population}, when (objectively) good products are common, this upward shift moves $T$ in a value-enhancing direction: a higher positive price makes the rating more selective and thereby more valuable. Conversely, when good products are scarce, a higher price pushes $T$ away from the value-maximizing level. In this case, the firm's profit motive undermines the value of the rating system.

\subsection{Multiple Ratings}\label{sec:many_recommendation}

We now show that additional ratings fail to add value when controversial variants are equally likely, or when the outside option is so attractive or unattractive that the single-rating optimum already induces the same decision. Formally, we consider a rating system where the receiver observes an arbitrary large number of ratings ($b$ likes and $d$ dislikes, each independently drawn from a population of senders as in the benchmark model).\footnote{We derive the posteriors for general $(b,d)$ in Appendix~\ref{AppMultipleRecommendations}.} To highlight the limits of informational gains from additional ratings, we focus on the case of \textit{infinite learning,} where the receiver observes an unbounded number of independent ratings.

\begin{proposition}\label{prop:no_value_infinite_learning}
    In the model with multiple ratings, suppose the decision environment $\mathcal{E}$ satisfies Assumption~\ref{ass:symmetry} holds. Let $\lambda := q_1/q_2$. Then, infinite learning does not increase the value of the rating system relative to the optimally designed single-rating benchmark if and only if one of the following holds:
    \begin{align*}
        \text{(i)}\ \lambda = 1;\quad
       \text{(ii)}\ \lambda > 1\ \text{and}\ \sigma \notin \left(\frac{1}{\lambda}, \lambda\right);\quad
        \text{(iii)}\ \lambda < 1\ \text{and}\ \sigma \notin \left(\lambda, \frac{1}{\lambda}\right).
    \end{align*}
\end{proposition}

Intuitively, a single like is sufficient to rule out the objectively bad variant, and a single dislike rules out the objectively good variant. Hence, when the receiver obtains a mix of both, he infers that the product must be one of the two controversial variants. With infinitely many ratings, the receiver can perfectly infer whether the product is objectively good, bad, or controversial. As in the case with only one rating, receivers buy objectively good products and reject bad ones. Thus, any potential gain must come from improved decisions about controversial products. However, controversial products always generate mixed ratings. If the odds of  good products are low, the outside option is unattractive, and the receiver may prefer a controversial product even if it is only an imperfect match. If the odds of good products are high, the receiver is more willing to choose the outside option. In either extreme, the decision under infinite learning coincides with what could be achieved by a single rating from an optimally designed rating system. Only when the uncontroversial variants are similarly likely and the likelihoods of the controversial variants are not too similar does additional information add value.

Finally, note that, under Assumption~\ref{ass:symmetry}, providing more ratings does not help a sales-maximizing platform either. To achieve this goal, it suffices to choose a single rating standard close to $0$, which leads all senders to issue likes when the product is not bad. Since mixed ratings can only reduce the likelihood of purchasing controversial products, adding them weakly reduces sales.\footnote{The finding that additional ratings need not add value is related to \cite{ifrach2019bayesian}, who develop a Bayesian model of sequential learning from binary like/dislike reviews and show that a seller's optimal dynamic pricing can cause learning to fail with positive probability. Both findings illustrate that more signals need not raise welfare, though through different mechanisms.}

\subsection{Distinct Sender and Receiver Distributions} \label{sec:distinct_distributions}

Sender and receiver distributions may be distinct. For instance, as early adopters, senders may differ systematically from the broader population.\footnote{\cite{acemoglu2022learning} study how such endogenous selection into reviewing shapes the speed of Bayesian learning. \cite{li2008self} document how early buyers with systematically different preferences generate self-selection bias in reviews even when ratings are mechanically truthful, framed around the same objective-quality versus subjective-fit decomposition we use.} Our earlier results extend naturally to such a setting. Let the sender be drawn from $F$ and the receiver from $G$. If $F$ satisfies Assumption~\ref{ass:symmetry}, the results from Sections~\ref{sec:value_design} and \ref{sec:sales} remain valid. For example, generalizing Proposition~\ref{prop:solution_symmetric_population}, Proposition~\ref{prop:solution_distinct_population} in the appendix shows that, when $F$ is symmetric, all receiver types are obedient,  $V(R)$ remains monotone in $R$ and $\sigma$ continues to play the same role as before in determining the sign of the slope. Similarly, the potential conflict between value and sales objectives persists. 

Beyond such robustness results, differing sender and receiver distributions give rise to new design considerations. In many cases, platforms may have partial control over the sender population or the information receivers have about it. For instance, Amazon's phrasing ``customers who bought this item also bought...'' implicitly informs the receiver about who the senders are. 

First, though it may seem that making the sender distribution more precise would raise the value of the system, this is not always true. A more concentrated sender pool makes ratings less dependent on idiosyncratic taste and therefore strengthens their objective content while reducing subjective content. Whether this raises or lowers value depends on which source of information is more important in the environment.\footnote{In particular, adapting the argument behind Proposition~\ref{prop:solution_distinct_population} shows that the effect depends on the prevalence of controversial products and on the extent to which receiver preferences are tilted toward one controversial variant or the other.}

Second, suppose that the sender and receiver distributions are fixed, but that the designer can provide the receivers with a signal about the sender's type. Assume for simplicity that the signal fully reveals the type. If the sender is revealed to be $i=0$, all types are obedient, as her rating only contains objective content. In contrast, if the signal reveals a sender of type $i\neq 0$, the receiver may learn about controversial products, too.\footnote{For instance, if $i=1/2$ and $R\in(0,1)$, a like reveals that the product is either of variant $g$ or $c_1$.} Thus, having a biased sender is not necessarily a bad thing.

Third, suppose sender and receiver types are correlated, so that ratings are effectively ``tailored'' to the receiver. Then the relevant sender distribution depends on the receiver's type. If these conditional sender distributions are symmetric, the subjective component of a rating vanishes for every receiver. Hence all receivers are obedient, ratings are interpreted purely through their objective content, and a sales-maximizing platform again prefers a rating close to $R=0$. 

These three cases share a common thread: The sender's characteristics shape not only how often likes are issued but also what they reveal. A biased sender, for instance, can be more informative about controversial products precisely because her ratings carry subjective content. The design of the sender pool thus offers a complementary lever to the rating standard for balancing objective and subjective content. \cite{bohren2025beyond} show experimentally that this lever can be relevant, where welfare in horizontally differentiated markets is restored by segmenting the rating system by consumer type.\footnote{Moreover, when the platform can influence the sender population, the choice of senders may itself depend on the objective: A value-maximizing platform may prefer a different sender composition than a sales-maximizing one, adding a new dimension to the value--sales conflict.}

\subsection{Polarization and the Prevalence of Controversial Products}\label{sec:comparative_statics}

Invoking symmetry (Assumption~\ref{ass:symmetry}), we now ask how the value of the rating system responds to two features of the environment other than the rating standard: the degree of preference polarization and the prevalence of controversial product variants.

First consider polarization (or preference heterogeneity). Recall that $\beta = F(1/2 - R)$ is the fraction of the population willing to recommend a controversial product at standard $R$. Under Assumption~\ref{ass:symmetry}, $\beta$ summarizes how often controversial products generate positive ratings. A higher $\beta$ raises the probability of likes, but weakens their objective content, since more positive ratings stem from controversial rather than uncontroversially good products. Whether greater preference heterogeneity raises or lowers the value of the rating system depends on $\sigma$: When uncontroversially good products are common, highly informative likes increase the value of the system, so a higher $\beta$ is harmful; when uncontroversially bad products are common, the value comes from ruling out bad products through more frequent likes, so a higher $\beta$ is beneficial (see Appendix~\ref{App:comparative_statics}). 

Next, we turn to the prevalence of controversial product variants. A natural conjecture is that a higher share of controversial products always reduces the value of the rating system, because it dilutes the objective content of recommendations. This is indeed true when good and bad products are similarly likely. More generally, however, the effect of $Q$ is non-monotone. Under Assumption~\ref{ass:symmetry}, the value of the rating system can be written as $V(R)=\pi^L\Delta_O^L$, reflecting that $Q$ has two opposing effects: Likes become more common but less likely to reflect uncontroversially good products. When $\sigma$ is far from one, the frequency effect can dominate the loss in objective informativeness, so the value of the rating system may be maximized at an interior share of controversial products (Corollary~\ref{cor:comparative_statics_Q} in Appendix~\ref{App:comparative_statics}).

\section{Conclusion}\label{sec:conclusion}

We analyze the design of coarse rating systems, focusing on the binary case of likes and dislikes.\footnote{One can show that the monotonicity result of Proposition \ref{prop:solution_symmetric_population} extends to richer rating structures. With three rating levels---like, neutral, dislike---defined by two rating standards $R_1 \leq R_2$, value remains monotone in the relevant standard under Assumption~\ref{ass:symmetry}: decreasing in $R_1$ if $\sigma < 1$ and increasing in $R_2$ if $\sigma > 1$. Intuitively, the neutral rating reveals that the product is controversial but carries no subjective content under symmetry, so the design logic reduces to pushing one of the extreme standards to its boundary.}  Our approach relies on the distinction between objective and subjective content of a rating. The payoff effect of a rating on a receiver has an objective part, shared by all types, and a subjective part, with sign depending on how closely her preferences are aligned with the sender's. A receiver is obedient if the objective effect outweighs the subjective one for her type. This decomposition shapes optimal design. When the population is sufficiently biased toward one of the controversial variants, a value-maximizing system preserves some subjective content at an interior standard, enabling inference about controversial products; otherwise, an extreme standard that filters out all subjective content is optimal.

The conflict between value and sales maximization is a central theme of our paper. When good products are scarce, both objectives call for low standards. When good products are common, this is usually still true for sales maximization but not for value maximization. Our result on pricing  reinforces this conflict. When good products are common, the firm's pricing enhances the value of the rating system by charging a high price, but when they are low, pricing undermines the value of the system.

Although disobedient receivers play an important conceptual role in the model, they appear only sparingly in the main design results. The reason is that optimal standards often induce obedience by all receiver types. Under sales maximization, this is often the case because the platform prefers extremely lenient standards, fostering obedience. Under value maximization, universal obedience always obtains for symmetric populations, and it can arise even under asymmetric populations when the subjective effect of ratings remains sufficiently weak. Thus, disobedient receivers are central for understanding how ratings can create value, but they are less often a feature of the optimally designed system itself.

There are several avenues for future research. For instance, the current model maintains classical assumptions such as risk neutrality and Bayesian updating. Relaxing these assumptions could affect which receiver types are obedient, with potential implications for the design of the system. On a related note, it would be interesting to use experimental evidence to see whether our model captures key features of receiver behavior and in which ways it fails to do so.

    \newpage
 \appendix
\setcounter{proposition}{0}
\setcounter{corollary}{0}
\setcounter{lemma}{0}

\renewcommand{\theproposition}{A.\arabic{proposition}}
\renewcommand{\thecorollary}{A.\arabic{corollary}}
\renewcommand{\thelemma}{A.\arabic{lemma}}

    \section{Formal Details and Proofs}

\label{SecAppProofs}

\subsection{Receiver Behavior}

\subsubsection{Posterior Distributions}\label{Sec:App_Posteriors}
The receiver's posterior $\mathbf{p}^L(R):= (p_g^L(R), p_1^L(R),p_2^L(R),p_b^{L}(R))$ after a like is
        \begin{align*}
        p_g^L(R) := \frac{q_g }{q_g+q_1\phi_1(R)+q_2\phi_2(R) }, \quad\quad   &p_2^L(R) :=  \frac{q_2\phi_2(R) }{q_g+q_1\phi_1(R)+q_2\phi_2(R) },\\
        p_1^L(R) := \frac{q_1\phi_1(R)}{q_g+q_1\phi_1(R)+q_2\phi_2(R)}, \quad \quad &p_b^{L}(R) := 0.     \end{align*}
A dislike gives rise to $\mathbf{p}^D(R):= (p_g^D(R), p_1^D(R),p_2^D(R),p_b^D(R))$, where
   \footnotesize \begin{align*}
        &p_g^D(R) := 0,\, &p_2^D(R) := \frac{q_2(1-\phi_2(R))}{q_1(1-\phi_1(R))+q_2(1-\phi_2(R))+ q_b},\\
        &p_1^D(R) := \frac{q_1(1-\phi_1(R))}{q_1(1-\phi_1(R))+q_2(1-\phi_2(R))+ q_b}, \,  &p_b^D(R) := \frac{q_b}{q_1(1-\phi_1(R))+q_2(1-\phi_2(R))+ q_b}. 
    \end{align*}\normalsize

    \subsubsection{Proof of Proposition \ref{prop:equivalence_following}}

Let $\Delta^{L}(i)\equiv U^{L}(i)-U^{0}(i)$ be the gain from buying after a like and $\Delta^{D}(i)\equiv U^{0}(i)-U^{D}(i)$ the gain from not buying after a dislike. The receiver buys after a like (does not buy after a dislike) if and only if $\Delta^{L}(i)\geq 0$ (resp.\ $\Delta^{D}(i)\geq 0$).

We first show the equivalence of these conditions. For every state $s\in\{g,1,2,b\}$, Bayes' rule implies $ \pi^{L}(p^{L}_{s}-q_{s})+\pi^{D}(p^{D}_{s}-q_{s})=0.$ Multiplying by the state-contingent payoffs $\{1, 1/2+i, 1/2-i, 0\}$ and summing over the states yields $\pi^{L} \Delta^{L}(i)=\pi^{D}\Delta^{D}(i).$ Because $\pi^{L},\pi^{D}>0$, $\Delta^{L}(i)$ and $\Delta^{D}(i)$ have the same sign, the desired equivalence holds.

It remains to show that this is equivalent to $\Delta_O^L\geq i\Delta_S^L$. By condition~\eqref{eq:optimal_following}, the receiver buys after a like if and only if
\begin{align*}
    (p_g^L-q_g) +\frac{p_1^L-q_1}{2} + \frac{p_2^L- q_2}{2} \;\geq\; i\bigl[(p_2^L-q_2)-(p_1^L-q_1)\bigr],
\end{align*}
which is exactly $\Delta_O^L\geq i\Delta_S^L$ by Definition~\ref{def:effects}.

    \subsubsection{Proof of Proposition \ref{prop:extreme_recommendations}}
First, observe that $    \lim_{R\rightarrow 1} p_1^L = \lim_{R\rightarrow 1} p_2^L = 0 $ and thus $\lim_{R\rightarrow 1} p_g^L= 1$. Hence, in the case $R\rightarrow 1$ the obedience condition   \eqref{eq:optimal_following} simplifies to
\begin{align}\label{eq:R1}
    1-q_g-\frac{q_1+q_2}{2} \geq i(q_1-q_2).
\end{align}
Equation (\ref{eq:R1}) always holds for $i\geq 0$ and $q_1 \ge q_2$. Further, for $q_1>q_2$ the condition is most stringent for $i=1/2$. In this case, it is equivalent to $1 \geq q_1 + q_g$, which always holds. The case $i<0$ is analogous. In the case $R\rightarrow 0$ we get $ \lim_{R\rightarrow 0} p_s^L = {q_s}/(1-q_b)$ for $s\in \{g, 1, 2\}$ and thus the obedience condition simplifies to
\begin{align*}
  &  \frac{q_gq_b}{1-q_b} +\frac{q_1q_b}{2(1-q_b)} +\frac{q_2q_b}{2(1-q_b)}  \geq i[\frac{q_2q_b}{1-q_b} -\frac{q_1q_b}{1-q_b} ]\\
     &  \Leftrightarrow 2q_g +q_1 +q_2\geq 2i[q_2-q_1 ].
\end{align*}
For $i\geq 0$, the last condition always holds if $q_1\geq q_2$. Further, for $q_2>q_1$ the condition is most stringent for $i=1/2$. In this case, it is equivalent to $q_g +  q_1 \geq 0$, which is always satisfied. The case $i<0$ is analogous.


\subsection{Value of Rating Systems}

    \subsubsection{ Proof of Proposition \ref{prop:value_of_RS}}

\noindent (i) If $|\Delta_S^L|\leq 2 \Delta_O^L$, inserting (\ref{eq_VNR}) and (\ref{eq_VF}) yields
\begin{small}
     \begin{align*}
      V(R) &= \int_{-1/2}^{1/2} (V^+(i) - V^0(i))dF(i)\\
      &= \pi^L\int_{-1/2}^{1/2} \bigl(p_g^L+(1/2+i) p_1^L+(1/2-i) p_2^L\bigr) - (q_g + (1/2+i) q_1 + (1/2-i)  q_2 ) dF(i)\\
       &= \pi^L\int_{-1/2}^{1/2} (p_g^L-q_g) +\frac{p_1^L-q_1}{2} + \frac{p_2^L- q_2}{2}dF(i)- \pi^L\int_{-1/2}^{1/2} i ([(p_2^L-q_2)-(p_1^L-q_1) ])dF(i)\\
              &= \pi^L\int_{-1/2}^{1/2} \underbrace{\frac{p_g^L-q_g}{2} -\frac{p_b^L-q_b}{2}}_{= \Delta_O^L}dF(i)- \pi^L\int_{-1/2}^{1/2} i (\underbrace{[(p_2^L-q_2)-(p_1^L-q_1) ]}_{= \Delta_S^L})dF(i)\\
       &= \pi^L  [\Delta_O^L - \Delta_S^L \E[i]]
  \end{align*}
  \end{small}
\noindent  (ii) If $\Delta_S^L<-2 \Delta_O^L$, inserting (\ref{eq_VNR}), (\ref{eq_VF}) and (\ref{eq_VNF}) yields
\begin{small}
      \begin{align*}
     & V(R)=\int_{-1/2}^{\tilde{i}}  V^-(i)dF(i) + \int_{\tilde{i}}^{1/2}V^+(i) dF(i) - \int_{-1/2}^{1/2} V^0(i)dF(i)\\
      &= \int_{-1/2}^{\tilde{i}} \pi^L\left(q_g+q_1 (\frac{1}{2}+i) +q_2 (\frac{1}{2}-i)\right)+ (1-\pi^L)\left(p_g^D+p_1^D (\frac{1}{2}+i) +p_2^D (\frac{1}{2}-i)\right) dF(i)\\
      &+ \int_{\tilde{i}}^{1/2} \pi^L\left(p_g^L+p_1^L (\frac{1}{2}+i) +p_2^L (\frac{1}{2}-i)\right)+ (1-\pi^L)\left(q_g+q_1 (\frac{1}{2}+i) +q_2 (\frac{1}{2}-i)\right)dF(i)\\
      &- \int_{-1/2}^{\tilde{i}} q_g+q_1 (\frac{1}{2}+i) +q_2 (\frac{1}{2}-i)dF(i)-\int_{\tilde{i}}^{1/2} q_g+q_1 (\frac{1}{2}+i) +q_2 (\frac{1}{2}-i)dF(i)\\
      &= (1-\pi^L)\int_{-1/2}^{\tilde{i}}\underbrace{\frac{p_g^D-q_g}{2} -\frac{p_b^D-q_b}{2}}_{= \Delta_O^D} - i(\underbrace{[(p_2^D-q_2)-(p_1^D-q_1) ]}_{= \Delta_S^D})dF(i)\\
      &+ \pi^L\int_{\tilde{i}}^{1/2} \underbrace{\frac{p_g^L-q_g}{2} -\frac{p_b^L-q_b}{2}}_{= \Delta_O^L} - i(\underbrace{[(p_2^L-q_2)-(p_1^L-q_1) ]}_{= \Delta_S^L}) dF(i)\\
      &= (1-\pi^L)F(\tilde{i}) \left[\Delta_O^D  - \Delta_S^D\E[i \mid i\leq \tilde{i}]\right]+ \pi^L(1- F(\tilde{i}))\left[\Delta_O^L - \Delta_S^L  \E[i \mid i\geq \tilde{i}]\right].
  \end{align*}
\end{small}
 
\noindent (iii) If $\Delta_S^L>2 \Delta_O^L$, then $V(R)=\int_{-1/2}^{\tilde{i}} V^+(i) dF(i) + \int_{\tilde{i}}^{1/2} V^-(i) dF(i) - \int_{-1/2}^{1/2} V^0(i)dF(i)$. Inserting (\ref{eq_VNR}), (\ref{eq_VF}) and (\ref{eq_VNF}) and proceeding analogously as in case (ii) gives the expression.

       \subsubsection{Proof of Lemma \ref{prop:pop_symmetry_allfollow}}
    According to Corollary \ref{cor:follow}, all types are obedient if 
    \begin{align}
       \min\{\Delta_O^L+\frac{\Delta_S^L}{2},\Delta_O^L-\frac{\Delta_S^L}{2}\}\geq 0. 
    \end{align} 
    To see that this condition is fulfilled, first observe that
    \begin{align*}
        \Delta_O^L +\frac{\Delta_S^L}{2}&=  p_g^L- q_g +  p_2^L- q_2\\
        &= \frac{q_g}{q_g + \beta(q_1+q_2)} - q_g + \frac{\beta q_2}{q_g + \beta(q_1+q_2)} - q_2 \\
        &= \frac{\beta(q_2 - (q_g + q_2)(q_1+q_2)) + q_g(1-q_g-q_2)}{q_g + \beta(q_1+q_2)}
    \end{align*}
    This term is positive if $q_2 - (q_g + q_2)(q_1+q_2)\geq 0$. If $q_2 - (q_g + q_2)(q_1+q_2)< 0$, then
 \begin{align*}
        &\beta(q_2 - (q_g + q_2)(q_1+q_2)) + q_g(1-q_g-q_2)\\
        &\geq (q_2 - (q_g + q_2)(q_1+q_2)) + q_g(1-q_g-q_2)=  (q_g+q_2)(1-q_1-q_2-q_g)\geq 0\text{.}
    \end{align*}
    Thus, $\Delta_O^L+\frac{\Delta_S^L}{2}\geq 0$ even in this case. Proceeding analogously, one obtains $\Delta_O^L-\frac{\Delta_S^L}{2}=p_g^L- q_g +  p_1^L- q_1\geq0.$ Corollary \ref{cor:follow} thus implies that all types are obedient. Assumption \ref{ass:symmetry} implies $\E[i]=0$, so that the result follows from Proposition \ref{prop:value_of_RS}.

\subsubsection{Proof of Proposition \ref{prop:solution_symmetric_population}}
We first restate the probabilities of likes, their objective effect and the value of the rating system in terms of $\sigma$, $\beta$ and $Q$:
\begin{lemma}\label{lemma:Reparameterization}
Suppose Assumption~\ref{ass:symmetry} holds. Then,
 \begin{align*}
            \pi^L &= \frac{(1 - 2 Q) \sigma}{1 + \sigma}+ 2 Q \beta; \\
            \Delta_O^L &=\frac{\beta Q (\sigma+1)-2 Q \sigma+\sigma}{2 \beta Q (\sigma+1)-2 Q \sigma+\sigma}- \frac{(1 - 2 Q) \sigma}{1 + \sigma} - Q;\\
            V(\beta) &= \pi^L\Delta_O^L=\notag (1-2Q)\frac{\sigma+Q\left(\beta-\sigma+\sigma^2(1-\beta)\right)}{(\sigma+1)^2}.\label{eq:symetric_value}
        \end{align*}
\end{lemma}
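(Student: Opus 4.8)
The plan is to exploit Assumption~\ref{ass:symmetry} to collapse the two conditional buy probabilities into the single parameter $\beta$, invert the reparametrization to write $q_H,q_L$ in terms of $(Q,\sigma)$, substitute into the definitions \eqref{eq:probbuy} and \eqref{eq:objective} of $\pi^B$ and $\Delta_O^B$, and finish using the identity $V(\beta)=\pi^B\Delta_O^B$ supplied by Proposition~\ref{prop:pop_symmetry_allfollow}. The first step I would record is the key consequence of symmetry: by definition $\phi_2(R)=F(1/2-R)=\beta$, while $\phi_1(R)=1-F(R-1/2)=1-F(-(1/2-R))=1-(1-F(1/2-R))=\beta$ using $F(-i)=1-F(i)$. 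Thus $\phi_1=\phi_2=\beta$, which is precisely what makes the subjective channel symmetric and justifies applying Proposition~\ref{prop:pop_symmetry_allfollow}.

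Next I would invert the reparametrization. The definitions give $q_1+q_2=2Q$ and $q_H/q_L=\sigma$, and since the prior sums to one we have $q_H+q_L=1-2Q$. Solving this linear system yields $q_L=\frac{1-2Q}{1+\sigma}$ and $q_H=\frac{(1-2Q)\sigma}{1+\sigma}$. Substituting into \eqref{eq:probbuy} and using $\phi_1=\phi_2=\beta$ gives $\pi^B=q_H+(q_1+q_2)\beta=q_H+2Q\beta$, which upon inserting $q_H$ is exactly the claimed first line.

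For $\Delta_O^B$, I would observe that the posteriors reduce to $p_H^B=q_H/\pi^B$ and $\frac{p_1^B+p_2^B}{2}=\frac{(q_1+q_2)\beta}{2\pi^B}=\frac{Q\beta}{\pi^B}$, while $\frac{q_1+q_2}{2}=Q$. Plugging these into \eqref{eq:objective} gives the compact form $\Delta_O^B=\frac{q_H+Q\beta}{\pi^B}-q_H-Q$; writing both the numerator $q_H+Q\beta$ and $\pi^B$ over the common denominator $1+\sigma$ and cancelling that factor produces the stated ratio. Finally, Proposition~\ref{prop:pop_symmetry_allfollow} gives $V(\beta)=\pi^B\Delta_O^B=(q_H+Q\beta)-\pi^B(q_H+Q)$, and substituting $q_H=\sigma q_L$ with $q_L=\frac{1-2Q}{1+\sigma}$, I would factor out $(1-2Q)$ and place the remainder over $(1+\sigma)^2$, recognizing the numerator as $\sigma+Q(\beta-\sigma+\sigma^2(1-\beta))$.

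I expect the main obstacle to be this last algebraic simplification rather than anything conceptual. After expanding $(q_H+2Q\beta)(q_H+Q)$ the expression carries many cross terms in $\sigma$, $Q$, and $\beta$, and one must check carefully that the coefficient of $\beta$ collapses to $Q(1-\sigma^2)$ while the $\beta$-free part collapses to $\sigma+Q(\sigma^2-\sigma)$ before the claimed factored form becomes visible. A useful intermediate bookkeeping device is the identity $1-q_H=q_L+2Q$, which lets one rewrite $1-q_H-Q-2Q\beta=q_L+Q(1-2\beta)$ and thereby organize the collection of terms cleanly.
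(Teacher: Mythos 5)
Your proposal is correct and follows essentially the same route as the paper's proof: expressing $q_H=(1-2Q)\sigma/(1+\sigma)$, using $\phi_1(R)=\phi_2(R)=\beta$ in the definitions \eqref{eq:probbuy} and \eqref{eq:objective}, and invoking Proposition~\ref{prop:pop_symmetry_allfollow} to get $V(\beta)=\pi^B\Delta_O^B$, with your final factorization $(1-2Q)\left[\sigma+Q\left(\beta-\sigma+\sigma^2(1-\beta)\right)\right]/(\sigma+1)^2$ checking out. Your only addition is the explicit derivation of $\phi_1=\phi_2=\beta$ from $F(-i)=1-F(i)$, a step the paper leaves implicit.
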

\begin{proof}
    The result on $\pi^L$ follows from inserting $q_g = (1 - 2 Q) \sigma/(1 + \sigma)$, $\phi_1(R)=\phi_2(R)=\beta$ and $ q_1+q_2=2Q$ into $\pi^L=q_g + q_1\phi_1(R)+q_2\phi_2(R)$.
Next, equation \eqref{eq:objective} gives
           \begin{align}\label{eq:ObjectiveApp}
               \Delta_O^L=
               \frac{\beta Q (\sigma+1)-2 Q \sigma+\sigma}{2 \beta Q (\sigma+1)-2 Q \sigma+\sigma}- \frac{(1 - 2 Q) \sigma}{1 + \sigma} - Q
           \end{align}
The result on $V(\beta)$ then follows immediately from Proposition \ref{prop:value_of_RS}.
\end{proof}
   
 The following result shows that countervailing effects of $\beta$ on the probability of a like and the objective value $\Delta_O^L$ of accepting it result in a non-trivial effect on $ V(\beta) =\pi^L\Delta_O^L$. 
    \begin{lemma} \label{lemma:solution_symmetric_population}
         Suppose Assumption~\ref{ass:symmetry} holds. Then,
        \begin{itemize}
            \item[(i)]$\pi^L$ is strictly increasing and $\Delta_O^L$ strictly decreasing in $\beta$;
         \item[(ii)] $\frac{\partial^2 V}{\partial \beta\partial \sigma}<0$ and $ V'(\beta)\text{ has the same sign as }1-\sigma$. 
         \end{itemize}
    \end{lemma}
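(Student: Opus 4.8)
The plan is to prove both parts by direct differentiation of the closed-form expressions supplied by Lemma~\ref{lemma:Reparameterization}, since under Assumption~\ref{ass:symmetry} every object is an explicit rational function of $\beta$, $\sigma$, and $Q$. The one structural fact I would invoke throughout is that $\mathbf{q}>0$ forces both $Q>0$ and $1-2Q = q_H+q_L>0$; together with $\sigma>0$, $\beta\in(0,1)$, and $\sigma+1>0$, these positivity facts pin down all the signs below. I would also record that the denominator appearing in $\Delta_O^B$, namely $2\beta Q(\sigma+1)-2Q\sigma+\sigma$, equals $(1+\sigma)\pi^B>0$, so it never vanishes on the relevant domain.

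For part (i), the claim about $\pi^B$ is immediate: the expression $\pi^B = \frac{(1-2Q)\sigma}{1+\sigma}+2Q\beta$ is affine in $\beta$ with slope $\partial\pi^B/\partial\beta = 2Q>0$. For $\Delta_O^B$, I would note that only the leading ratio term depends on $\beta$. Writing $a:=Q(\sigma+1)>0$ and $b:=\sigma(1-2Q)>0$, this term is $\frac{\beta a+b}{2\beta a+b}$, and the quotient rule gives $\frac{\partial}{\partial\beta}\frac{\beta a+b}{2\beta a+b} = \frac{-ab}{(2\beta a+b)^2}<0$, so $\Delta_O^B$ is strictly decreasing in $\beta$. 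This already delivers the ``countervailing effects'' narrative: $\pi^B$ rises while $\Delta_O^B$ falls, so the sign of $V'(\beta)=(\pi^B\Delta_O^B)'$ is not obvious a priori and must be settled in part (ii).

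For part (ii), I would differentiate the product form $V(\beta)=(1-2Q)\frac{\sigma+Q(\beta-\sigma+\sigma^2(1-\beta))}{(\sigma+1)^2}$ directly rather than via the product rule, since $V$ is affine in $\beta$ (so its slope is constant and the monotonicity is global). Collecting the $\beta$-terms in the numerator gives $\partial_\beta[\sigma+Q(\beta-\sigma+\sigma^2(1-\beta))]=Q(1-\sigma^2)$, hence $V'(\beta)=(1-2Q)\frac{Q(1-\sigma^2)}{(\sigma+1)^2}=\frac{(1-2Q)Q(1-\sigma)}{\sigma+1}$. Because $(1-2Q)Q/(\sigma+1)>0$, the sign of $V'(\beta)$ is exactly the sign of $1-\sigma$, which is the stated trichotomy. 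For the cross-partial I would differentiate this expression for $V'(\beta)$ in $\sigma$: the only $\sigma$-dependence is through $(1-\sigma)/(1+\sigma)$, whose derivative is $-2/(1+\sigma)^2$, giving $\frac{\partial^2 V}{\partial\beta\,\partial\sigma}=\frac{-2Q(1-2Q)}{(1+\sigma)^2}<0$.

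The argument is essentially computational, so there is no conceptual obstacle once Lemma~\ref{lemma:Reparameterization} is in hand; the only care needed is sign bookkeeping. The step most prone to slips is the quotient-rule computation for $\Delta_O^B$ — it is worth noting that the $2\beta a^2$ cross terms cancel so the numerator collapses to $-ab$ — and the verification that $1-2Q>0$, which is what guarantees $b>0$ and the positivity of the prefactors in both parts. As a consistency check I would confirm that $V'(\beta)$ computed from the explicit $V$ formula agrees in sign with the product of the monotonicities of $\pi^B$ and $\Delta_O^B$ from part (i) on either side of $\sigma=1$.
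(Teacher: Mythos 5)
Your proof is correct and takes essentially the same approach as the paper's: both arguments differentiate the closed-form expressions for $\pi^B$, $\Delta_O^B$, and $V(\beta)$ supplied by Lemma~\ref{lemma:Reparameterization}, and your computed derivatives $V'(\beta)=\frac{Q(1-2Q)(1-\sigma)}{1+\sigma}$ and $\frac{\partial^2 V}{\partial\beta\,\partial\sigma}=-\frac{2Q(1-2Q)}{(1+\sigma)^2}$ match the paper's exactly, with your substitution $a:=Q(\sigma+1)$, $b:=\sigma(1-2Q)$ reproducing the paper's quotient-rule expression $-\frac{Q(1-2Q)\sigma(1+\sigma)}{(\sigma-2Q\sigma+2\beta Q(1+\sigma))^2}$ for $\partial\Delta_O^B/\partial\beta$. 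Your explicit bookkeeping that $1-2Q=q_H+q_L>0$ under $\mathbf{q}>0$ is a small but welcome addition that the paper leaves implicit.
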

\begin{proof}
   (i)    Note that $\pi^L= (1 - 2 Q) \sigma/(1 + \sigma)+ 2 Q \beta$, so that $\pi^L$ is strictly increasing.  Moreover, $\derivative{\Delta_O^L}{\beta} = -\frac{(Q (1 - 2 Q) \sigma (1 + \sigma))}{(\sigma - 2 Q \sigma + 2 \beta Q (1 + \sigma))^2}<0.$     (ii) Lemma \ref{lemma:Reparameterization} implies $V'(\beta) = \frac{Q(1-2Q)(1-\sigma)}{1 +\sigma},$ and hence $\derivative{^2V}{\beta\partial\sigma } =-\frac{2Q(1-2Q)}{(1 +\sigma)^2} <0,$ so that $V'(\beta)$ is negative for $\sigma>1$, positive for $\sigma<1$ and zero for $\sigma=1$.
\end{proof}

 Then, using $\beta = F(1/2-R)$, Lemma~\ref{lemma:solution_symmetric_population} directly implies Proposition \ref{prop:solution_symmetric_population}.\footnote{Result (iii) follows because $V(\beta)$ is independent of $\beta$ in this case.}

\subsubsection{Proof of Proposition \ref{prop:asymmetric_interior_unequal_q}}

The proof follows from Steps 1--6, which we now state and prove. We will repeatedly use the formulation that \textit{a statement holds near the extremes}, meaning that for any decision environment $\mathcal{E}$, there exists $\delta (\mathcal{E}) >0$ such that  the statement holds for $R< \delta (\mathcal{E}) $ and $R>1-\delta (\mathcal{E})$.

\textbf{Step 1}: \textit{Near the extremes},
$V(R)=\pi^L(R)\big(\Delta_O^L(R)-\mu\,\Delta_S^L(R)\big)$.\newline
\textbf{Proof of Step 1}: For  $R\to 0$ or $R\to 1$, all types have a strict preference for being obedient. Since $\Delta_O^L(R)$ and $\Delta_S^L(R)$ are continuous in $R$, for any decision environment $\mathcal{E}$, there exists
$\delta (\mathcal{E}) >0$ such that all types are obedient for all
$R\in(0,\delta (\mathcal{E}))\cup(1-\delta (\mathcal{E}),1)$.
Hence, for such $R$ Proposition~\ref{prop:value_of_RS}(i) applies, and Step 1 follows.

\textbf{Step 2}:  \textit{Near the extremes,}\newline
$V(R)=q_1\Big(\tfrac12+\mu\Big)\phi_1(R)+q_2\Big(\tfrac12-\mu\Big)\phi_2(R)-\left[q_g+\frac{q_1+q_2}{2}-\mu(q_2-q_1)\right]\,\pi^L(R)+C,$\newline
\textit{where the term $C$ is independent of $R$}.

\textbf{Proof of Step 2}: Recalling the posteriors after a like (see page~\pageref{SecAppProofs}) and applying Definition~\ref{def:effects}, we obtain 
\begin{align*}
\pi^L\Delta_O^L
&=\Big(q_g-\pi^L q_g\Big)+\frac{q_1\phi_1-\pi^L q_1}{2}+\frac{q_2\phi_2-\pi^L q_2}{2},\\
\pi^L\Delta_S^L
&=\Big(q_2\phi_2-\pi^L q_2\Big)-\Big(q_1\phi_1-\pi^L q_1\Big)
=(q_2\phi_2-q_1\phi_1)-\pi^L(q_2-q_1).
\end{align*}
Inserting these terms into the expression from Step 1 and rearranging gives the claim.

\textbf{Step 3}: \textit{Near the extremes, }$V'(R)=q_1A_1\,\phi_1'(R)+q_2A_2\,\phi_2'(R)$.\newline
\textbf{Proof of Step 3}: Differentiating the expression obtained in Step 2 gives
\begin{align*}
    V'(R)&=q_1\Big(\tfrac12+\mu\Big)\phi_1'(R)+q_2\Big(\tfrac12-\mu\Big)\phi_2'(R)\\
&-\left[q_g+\frac{q_1+q_2}{2}-\mu(q_2-q_1)\right]\big(q_1\phi_1'(R)+q_2\phi_2'(R)\big).
\end{align*}
The claim follows from this expression using
\begin{align*}
    &\tfrac12+\mu-\left[q_g+\frac{q_1+q_2}{2}-\mu(q_2-q_1)\right] = \frac{q_b-q_g}{2}+\mu(q_2-q_1)+\mu = A_1,\\
    &\tfrac12-\mu-\left[q_g+\frac{q_1+q_2}{2}-\mu(q_2-q_1)\right] = \frac{q_b-q_g}{2}+\mu(q_2-q_1)-\mu = A_2.
\end{align*}

\textbf{Step 4}: $V'(0+)=-q_1A_1 f_- - q_2A_2 f_+$ \textit{and}  $V'(1-)=-q_1A_1 f_+ - q_2A_2 f_-$.

\textbf{Proof of Step 4}: Since $\phi_1'(R)=-f(R-\tfrac12)$ and $\phi_2'(R)=-f(\tfrac12-R)$,
\[
V'(R)=-q_1A_1\,f(R-\tfrac12)-q_2A_2\,f(\tfrac12-R)
\]
near the extremes. Let $f_-:=f(-\tfrac12)$ and $f_+:=f(\tfrac12)$. 
Taking limits in the expression for $V'(R)$ gives the result.

\textbf{Step 5:}  \textit{Condition~\eqref{eq:asymmetric_endpoint_condition} is equivalent to $V'(0+)>0$ and $V'(1-)<0.$ }

\textbf{Proof of Step 5}: 
Since Assumption~\ref{ass:full_support_bounded_density} implies $f(\tfrac12)>0$, we have
\[
V'(0+)>0
\quad\Longleftrightarrow\quad
q_1A_1 f(-\tfrac12)+q_2A_2 f(\tfrac12)<0,
\]
and
\[
V'(1-)<0
\quad\Longleftrightarrow\quad
q_1A_1 f(\tfrac12)+q_2A_2 f(-\tfrac12)>0.
\]
Thus condition~\eqref{eq:asymmetric_endpoint_condition} is equivalent to $V'(0+)>0$ and $V'(1-)<0$.

\textbf{Step 6:} \textit{$V(R)$ has an interior maximum $R^\ast\in(0,1)$.}\newline
\textbf{Proof of Step 6}: According to Step 5, condition~\eqref{eq:asymmetric_endpoint_condition} implies $V'(0+)>0$ and $V'(1-)<0$.
Hence, for any decision environment $\mathcal{E}$, there exists $\delta(\mathcal{E})>0$ such that $V$ is strictly increasing on $(0,\delta(\mathcal{E}))$ and strictly decreasing on $(1-\delta(\mathcal{E}),1)$. Thus the supremum of $V$ over $(0,1)$ equals the supremum of $V$ over some compact interval $[\delta(\mathcal{E}),1-\delta(\mathcal{E})]\subset(0,1)$. Since $V$ is continuous in $R$,\footnote{Since payoffs are bounded and, for every type $i$, the receiver's ex ante payoff under standard $R$ is a continuous function of $R$, the value function $V(R)$ is continuous by the dominated convergence theorem.} the extreme value theorem implies that $V(R)$ attains its maximum on $[\delta(\mathcal{E}),1-\delta(\mathcal{E})]$, and hence on $(0,1)$, at some $R^\ast\in(0,1)$.

\subsubsection{Proof of Proposition \ref{prop:benchmark_characterization}}
    Throughout this proof, we use the identities $q_g=(1-2Q)\sigma/(1+\sigma)$ and $q_b=(1-2Q)/(1+\sigma)$. Since Proposition~\ref{prop:benchmark_characterization} assumes $Q\leq \bar Q(a,\sigma)$, Lemma~\ref{lemma:universal_acceptance_cutoff} implies that all receiver types are obedient for every $R\in(0,1)$. Hence Proposition~\ref{prop:value_of_RS}(i) applies throughout, yielding $V(R) = \pi^L[\Delta_O^L - \Delta_S^L \E[i]].$   In Example~\ref{ass:asymmetry}, $\phi_1(R)=1-R^a$, $\phi_2(R)=(1-R)^a$, and $\E[i]=\frac{a-1}{2(a+1)}$, so that
    \begin{align*}
        V(R)
        &=  \frac{(1-2 Q) \sigma (Q (\sigma-1)+1)}{(\sigma+1)^2} + \frac{Q}{a+1} (1-R^a) \left(a - (a+1)\left(\frac{(1-2Q)\sigma}{1 +\sigma} +Q\right)\right)\\
        & + \frac{Q}{a+1} (1-R)^a \left(1 - (a+1)\left(\frac{(1-2Q)\sigma}{1 +\sigma} +Q\right)\right)\\
        &=c_0 + \frac{Q}{a+1}\left[c_1(a)(1-R^a)+c_2(a)(1-R)^a\right]
    \end{align*}
    with $c_0= \frac{(1-2 Q) \sigma (Q (\sigma-1)+1)}{(\sigma+1)^2} $, $c_1(a) = a  +(a+1) \frac{Q( \sigma-1)-\sigma}{\sigma+1}$ and $c_2(a) =1 + (1+a)\frac{Q(\sigma-1)-\sigma}{\sigma+1}$. We thus obtain
    \begin{align*}
        V'(R) &= -\frac{Qa}{a+1}\left[R^{a-1} c_1(a)+(1-R)^{a-1}c_2(a)\right],\\
        V''(R) &= -\frac{Qa(a-1)}{a+1}\left[R^{a-2} c_1(a)-(1-R)^{a-2}c_2(a)\right],
    \end{align*}
    Since $R^{a-1}>0$ and $(1-R)^{a-1}>0$ on $(0,1)$ and $Qa/(a+1)>0$, the sign of $V'(R)$ is determined by the signs of $c_1(a)$ and $c_2(a)$, which can be written as
    \begin{align*}
       c_1(a)=&\frac{\sigma[Q(a+1)-1]+a-Q(a+1)}{\sigma+1},\\
      c_2(a)=&\frac{\sigma[Q(a+1)-a]+1-Q(a+1)}{\sigma+1}.
    \end{align*}
Solving $c_1(a)=0$ and $c_2(a)=0$ gives $\sigma_1^\ast := \frac{a-Q(a+1)}{1-Q(a+1)}$ and $\sigma_2^\ast := \frac{1-Q(a+1)}{a-Q(a+1)}.$ If $Q<\min\{a,1\}/(a+1)$, then the numerators of $c_1(a)$ and $c_2(a)$ are strictly decreasing affine functions of $\sigma$ with positive intercepts. Therefore
\[
c_1(a),c_2(a)\geq 0 \iff \sigma\leq \min\{\sigma_1^\ast,\sigma_2^\ast\},
\]
\[
c_1(a),c_2(a)\leq 0 \iff \sigma\geq \max\{\sigma_1^\ast,\sigma_2^\ast\}.
\]
Hence, if $Q<\min\{a,1\}/(a+1)$, $V$ is decreasing exactly if $\sigma\leq \min\{\sigma_1^\ast,\sigma_2^\ast\}$ and increasing exactly if $\sigma\geq \max\{\sigma_1^\ast,\sigma_2^\ast\}$.

If instead $Q\geq \min\{a,1\}/(a+1)$, these two monotone cases cannot occur. To see this, first suppose $a>1$, so that the cutoff equals $1/(a+1)$. Since $Q<1/2<a/(a+1)$, we have $Q(a+1)-1\geq 0$ and $a-Q(a+1)>0,$ and hence $c_1(a)>0$ for every $\sigma>0$. Moreover, $Q(a+1)-a<0$ and $1-Q(a+1)\leq 0,$ so that $c_2(a)<0$ for every $\sigma>0$. Thus $c_1(a)>0>c_2(a)$.

Now suppose $a<1$, so that the cutoff equals $a/(a+1)$. Since $Q<1/2<1/(a+1)$, we have $Q(a+1)-1<0$ and $a-Q(a+1)\leq 0$ and hence $c_1(a)<0$ for every $\sigma>0$. Moreover, $Q(a+1)-a\geq 0,$ and $1-Q(a+1)>0,$ so $c_2(a)>0$ for every $\sigma>0$. Thus $c_1(a)<0<c_2(a)$. Therefore, at and above the cutoff $\min\{a,1\}/(a+1)$, the coefficients have opposite signs for every $\sigma>0$. Together with the case $Q<\min\{a,1\}/(a+1)$ and
\[
\min\{\sigma_1^\ast,\sigma_2^\ast\}<\sigma<\max\{\sigma_1^\ast,\sigma_2^\ast\},
\]
this is exactly the  region where $c_1(a)$ and $c_2(a)$ have opposite signs. Since
\begin{align*}
c_1(a)-c_2(a)
\begin{cases}
<0 & \text{if } a<1,\\
>0 & \text{if } a>1,
\end{cases}
\end{align*}
the mixed-sign case means $c_1(a)>0>c_2(a)$ when $a>1$, and $c_1(a)<0<c_2(a)$ when $a<1$. In either case,
\begin{align*}
V''(R)=-\frac{Qa(a-1)}{a+1}\bigl[R^{a-2}c_1(a)-(1-R)^{a-2}c_2(a)\bigr]<0 \qquad\text{for all } R\in(0,1),
\end{align*}
so $V$ is strictly concave on $(0,1)$. To conclude that the maximum is interior, we also check the derivatives at $R=0$ and $R=1$. If $a>1$ and $c_1(a)>0>c_2(a)$, then
\[
V'(0+)=-\frac{Qa}{a+1}c_2(a)>0,\qquad V'(1-)=-\frac{Qa}{a+1}c_1(a)<0.
\]
If $a<1$ and $c_1(a)<0<c_2(a)$, then $R^{a-1}c_1(a)\to -\infty$ as $R\downarrow 0$ while $(1-R)^{a-1}c_2(a)\to c_2(a)$, so $V'(0+)>0$; similarly, $(1-R)^{a-1}c_2(a)\to +\infty$ as $R\uparrow 1$, so $V'(1-)<0$. Hence $V$ increases near $0$ and decreases near $1$, and strict concavity implies a unique interior maximizer $R^\ast\in(0,1)$.

Combining the cases $Q\geq \min\{a,1\}/(a+1)$ and $Q<\min\{a,1\}/(a+1)$ proves
  Proposition~\ref{prop:benchmark_characterization}.


\subsection{Sales Maximization}

\subsubsection{Proof of Proposition \ref{prop:demand_asymmetric}}

We show $D(R)\le D(0)$ for all $R>0$. Let $\hat{F}(R)\in[0,1]$ denote the fraction of receiver types that buy a product with a like at standard $R$. Then the claim is equivalent to
\begin{equation}\label{eqAux}
    \pi^L(R)\hat{F}(R)+ (1-\pi^L(R))(1- \hat{F}(R)) \leq q_g+q_1+q_2 .
\end{equation}
Fix $R>0$. Then the left-hand side of
\eqref{eqAux} equals
\[
\pi^L(R)\hat{F}(R)+(1-\pi^L(R))(1-\hat{F}(R))=(1-\pi^L(R))+(2\pi^L(R)-1)\hat{F}(R).
\]
As a function of $\hat{F}(R)\in[0,1]$ this expression is affine, hence it is bounded above by its
endpoint values:
\[
\pi^L(R) \hat{F}(R)+(1-\pi^L(R))(1-\hat{F}(R))\le \max\{\,1-\pi^L(R),\ \pi^L(R)\,\}.
\]
Next, by definition $\pi^L(R)=q_g+q_1\phi_1(R)+q_2\phi_2(R)$ with $\phi_1,\phi_2\in[0,1]$, so
\[
\pi^L(R)\le q_g+q_1+q_2 = D(0).
\]
Moreover, since $\phi_1,\phi_2\ge 0$, we also have $\pi^L(R)\ge q_g$. Under the assumption
$q_g\ge q_b$ and using $q_g+q_1+q_2=1-q_b$, it follows that
\[
1-\pi^L(R)\le 1-q_g \le 1-q_b = q_g+q_1+q_2 = D(0).
\]
Therefore both $\pi^L(R)$ and $1-\pi^L(R)$ are at most $D(0)$, implying
\[
\max\{1-\pi^L(R),\pi^L(R)\}\le D(0).
\]
Combining the inequalities yields \eqref{eqAux}, hence $D(R)\le D(0)$ for all $R>0$.
Consequently, sales are maximized as $R\to 0$.

\subsection{Further Results}

\subsubsection{Optimal Pricing}

\subsubsection*{Proof of Proposition \ref{prop:pricing}}

The proof will follow from Steps 1-4 below.

\smallskip
\noindent\textbf{Step 1.} \textit{The unique unrestricted profit maximizer is}
\[
P^u(R)=\frac{q_g+2Q(1-R)}{4Q}.
\]

Fix $R\in(0,1)$ and consider prices $P\in[0,1-R]$. For a uniform type distribution and $q_1=q_2=Q$, a sender gives a like for each controversial variant with probability $1-T=1-R-P$. Hence
\[
\pi^L(R+P)=q_g+q_1(1-R-P)+q_2(1-R-P)=q_g+2Q(1-R-P).
\]
Ignoring the demand constraint in \eqref{eq:demand_sym_price}, the firm's (unrestricted) profit is
\[
\Pi^u(P;R):=P\,\pi^L(R+P)=P\big(q_g+2Q(1-R-P)\big).
\]
This function is strictly concave in $P$ (since $Q>0$), and its first-order condition is
\[
\frac{\partial \Pi^u(P;R)}{\partial P}
=q_g+2Q(1-R-P)-2QP
=q_g+2Q(1-R)-4QP=0.
\]
This first-order condition yields
\[
P^u(R)=\frac{q_g+2Q(1-R)}{4Q}.
\]
This is the unconstrained maximizer of the quadratic objective $\Pi^u(P;R)$. It need not lie in the admissible interval $[0,1-R]$, so at this stage it is not yet the solution to the constrained pricing problem.

\smallskip
\noindent\textbf{Step 2.} \textit{The objective content of the rating system satisfies}
\[
\Delta_O^L(R+P)=\frac12-q_g-Q+\frac{q_g}{2\big(q_g+2Q(1-R-P)\big)}.
\]

Sales are positive if and only if the acceptance constraint in \eqref{eq:demand_sym_price} holds, i.e.\ if and only if $P\le \Delta_O^L(R+P).$  Under $q_1=q_2=Q$ and uniform types, $\Delta_S^L=0$, so $\Delta_O^L(R+P) = U^L(i) - U^0(i)$ is the common willingness to pay after a like for all types $i$. Moreover, the posterior probability of a good product after $L$ is
\[
p_g^L(R+P)=\frac{q_g}{q_g+2Q(1-R-P)},
\]
and the posterior probabilities of the two controversial variants are equal. Plugging these posteriors into Definition~\ref{def:effects}(i) yields the result.

\textbf{Step 3}:\textit{ The acceptance gap} $\gamma(P):=\Delta_O^L(R+P)-P$ \textit{ is strictly decreasing on} $[0,1-R]$.

Differentiating $\gamma$ gives
\[
\gamma'(P)=\frac{q_gQ}{\big(q_g+2Q(1-R-P)\big)^2}-1.
\]
For all $P\in[0,1-R]$ we have $q_g+2Q(1-R-P)\ge q_g$, hence
\[
\frac{q_gQ}{\big(q_g+2Q(1-R-P)\big)^2}\le \frac{q_gQ}{q_g^2}=\frac{Q}{q_g}<1,
\]
where the strict inequality uses the assumption $q_g>Q$. Therefore $\gamma'(P)<0$ on $[0,1-R]$.

Moreover, $\gamma(0)=\Delta_O^L(R)>0$. To see this, note that
\[
\Delta_O^L(R)=\frac{p_g^L(R)-q_g+q_b}{2}.
\]
Since $R\in(0,1)$ and $Q>0$, we have $q_g+2Q(1-R)<q_g+2Q=1-q_b<1$, so $p_g^L(R)=q_g/(q_g+2Q(1-R))>q_g$. Hence $p_g^L(R)-q_g+q_b>0$, proving $\gamma(0)>0$. Therefore the feasible set
\[
\Big\{P\in[0,1-R]:\; P\le \Delta_O^L(R+P)\Big\}
=\Big\{P\in[0,1-R]:\; \gamma(P)\ge 0\Big\}
\]
Since $\gamma$ is continuous and strictly decreasing, this feasible set is a nonempty compact interval of the form $[0,\bar P(R)]$, where $\bar P(R):=\max\{P\in[0,1-R]: \gamma(P)\ge 0\}$. If $\gamma(1-R)\le 0$, then $\bar P(R)$ is the unique root of $\gamma$ in $[0,1-R]$, characterized by $\gamma(\bar P(R))=0$, i.e.\ $\bar P(R)=\Delta_O^L(R+\bar P(R))$. If instead $\gamma(1-R)>0$, then $\bar P(R)=1-R$ and every price in $[0,1-R]$ is feasible.

\textbf{Step 4}:\textit{ The unique maximizer on} $[0,\bar P(R)]$ \textit{is} $P^*(R)=\min\{P^u(R),\bar P(R)\}.$

From Step 3, the original problem reduces to maximizing the strictly concave quadratic $\Pi^u(P;R)$ over the interval $[0,\bar P(R)]$. Since $P^u(R)$ is the unconstrained maximizer from Step 1, the unique constrained maximizer is
\[
P^*(R)=
\begin{cases}
P^u(R), & \text{if } P^u(R)\le \bar P(R),\\[4pt]
\bar P(R), & \text{if } P^u(R)>\bar P(R).
\end{cases}
\]
Equivalently,
\[
P^*(R)=\min\{P^u(R),\bar P(R)\}.
\]
Since $\bar P(R)\le 1-R$, this formula automatically respects the admissible price restriction $P\in[0,1-R]$.

\subsubsection{Multiple ratings}
\label{AppMultipleRecommendations}

We now derive some auxiliary results for the case of multiple ratings, before we prove Proposition \ref{prop:no_value_infinite_learning}.

After $b$ likes and $d$ dislikes, where $b+d>0$, the posterior reads
\begin{align*}
    p_g(b,d) &= \begin{cases}
        \frac{q_g}{q_g + q_1(\phi_1(R))^b + q_2(\phi_2(R))^b} & \text{if } d=0\\
        0 & \text{if } d>0\\
    \end{cases}\\
    p_1(b,d) &= \frac{q_1(\phi_1(R))^b (1-\phi_1(R))^d }{q_1(\phi_1(R))^b (1-\phi_1(R))^d  + q_2(\phi_2(R))^b (1-\phi_2(R))^d  }\\
	p_2(b,d)&= \frac{q_2(\phi_2(R))^b (1-\phi_2(R))^d  }{q_1(\phi_1(R))^b (1-\phi_1(R))^d  + q_2(\phi_2(R))^b (1-\phi_2(R))^d  }\\
    p_b(b,d) &= \begin{cases}
        \frac{q_b}{q_1(1-\phi_1(R))^d + q_2(1-\phi_2(R))^d + q_b} & \text{if } b=0\\
        0 & \text{if } b>0\\
    \end{cases}
\end{align*}
The formulas for $p_1$ and $p_2$ are stated for $b, d > 0$, where objectively good and bad products are ruled out. When $d = 0$ or $b = 0$, $p_1$ and $p_2$ follow from $p_g + p_1 + p_2 + p_b = 1$ and the ratio $p_1/p_2 = q_1\phi_1^b(1-\phi_1)^d / [q_2\phi_2^b(1-\phi_2)^d]$.

\begin{lemma} \label{lemma:multi_recommendations}
Consider the model with multiple ratings. Suppose Assumption \ref{ass:symmetry} holds and consider any standard $R\in(0,1)$.
\begin{enumerate}
    \item[(i)] If the receiver obtains only likes (dislikes), then, as the number of ratings increases, her posterior belief converges to $p_g=1$ ($p_b=1$).
    \item[(ii)] If the receiver obtains mixed ratings, then $p_1(b, d)/p_2(b,d) = q_1/q_2$ and $p_g(b,d)= p_b(b,d)= 0$ for any $b,d>0$.
\end{enumerate}
\end{lemma}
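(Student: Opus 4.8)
The plan is to exploit the single simplification that Assumption~\ref{ass:symmetry} delivers. Since $\phi_1(R) = 1 - F(R-1/2)$ and $\phi_2(R) = F(1/2-R)$, applying the symmetry condition $F(-i) = 1-F(i)$ at $i = R-1/2$ yields $F(1/2-R) = 1 - F(R-1/2)$, so that $\phi_1(R) = \phi_2(R)$. Writing this common value as $\beta := F(1/2-R)$, which lies in $(0,1)$ for $R \in (0,1)$ by full support, collapses every posterior into a function of $\beta$, the counts $b,d$, and the prior. This equalization is the only structural input the proof needs, and I would establish it first.

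For part (i), I would substitute $\phi_1(R)=\phi_2(R)=\beta$ into the pure-buy posterior ($d=0$) to obtain
$$p_H(b,0) = \frac{q_H}{q_H + (q_1+q_2)\beta^b}.$$
Because $\beta \in (0,1)$, the term $\beta^b \to 0$ as $b \to \infty$, so $p_H(b,0) \to 1$. The pure-don't-buy case ($b=0$) is symmetric: substitution gives $p_L(0,d) = q_L/\bigl(q_L + (q_1+q_2)(1-\beta)^d\bigr)$, and since $1-\beta \in (0,1)$ we have $(1-\beta)^d \to 0$, hence $p_L(0,d) \to 1$. This step is merely convergence of a geometric term.

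For part (ii), I would treat the two claims separately. The statement $p_H(b,d) = p_L(b,d) = 0$ for $b,d>0$ follows from the sender's mechanical rule rather than from any computation: a good product yields payoff $1 \geq R$ for every type and thus always triggers a buy recommendation, so even a single don't-buy recommendation rules it out, giving $p_H=0$; symmetrically, a bad product yields payoff $0 < R$ and always triggers a don't-buy recommendation, so a single buy recommendation forces $p_L=0$. With mixed recommendations both are present, so both uncontroversial versions carry zero posterior weight. The ratio claim then follows by cancellation: using the general expressions for $p_1(b,d)$ and $p_2(b,d)$ together with $\phi_1(R)=\phi_2(R)=\beta$,
$$\frac{p_1(b,d)}{p_2(b,d)} = \frac{q_1\,\beta^b(1-\beta)^d}{q_2\,\beta^b(1-\beta)^d} = \frac{q_1}{q_2},$$
the matching powers of $\beta$ and $1-\beta$ cancelling precisely because symmetry equates $\phi_1$ and $\phi_2$.

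There is no genuine obstacle here: the entire content lies in recognizing that Assumption~\ref{ass:symmetry} identifies $\phi_1$ with $\phi_2$, after which part (i) is a geometric limit and part (ii) is a cancellation. The only point demanding care is justifying why $p_H$ and $p_L$ vanish in the mixed case, since the tabulated posterior formulas only record the boundary cases $b=0$ and $d=0$; I would argue this directly from the deterministic recommendations generated by the two uncontroversial product versions rather than reading it off the displayed formulas.
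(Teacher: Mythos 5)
Your proof is correct and takes essentially the same route as the paper's: Assumption~\ref{ass:symmetry} gives $\phi_1(R)=\phi_2(R)=\beta\in(0,1)$, part (i) reduces to the geometric limits $\beta^b\to 0$ and $(1-\beta)^d\to 0$, and part (ii) follows by cancellation (the paper even records the slightly stronger conclusion $p_i(b,d)=q_i/(q_1+q_2)$, which your ratio computation immediately yields). Your explicit argument that $p_H(b,d)=p_L(b,d)=0$ under mixed recommendations---via the deterministic buy and don't-buy reports generated by the two uncontroversial versions---simply spells out what the paper's proof dismisses as obviously holding.
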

\begin{proof}
    (i) Suppose the receiver gets only likes and that $b\rightarrow\infty$. Then, $\phi_1(R) = \phi_2(R) =:\phi$. Since $F$ has full support and $R\in(0,1)$, $\phi\in (0,1)$, so that $\lim_{b\rightarrow \infty} (\phi_i(R))^b = 0$ for $i=1, 2$. Hence, $\lim_{b\rightarrow \infty} p_g(b, 0) = 1$. The argument for dislikes is analogous. 

    (ii) For any $b,d>0$, $p_g(b,d)= p_b(b,d)= 0$ obviously holds. Further, because $\phi_1(R) = \phi_2(R)$ by Assumption \ref{ass:symmetry}, $p_i(b, d) = \frac{q_i}{q_1 + q_2}$  for $i=1,2$, yielding statement (ii).
\end{proof}

\subsubsection*{Proof of Proposition \ref{prop:no_value_infinite_learning}}

The proof will follow from Steps 1-4 below. We use the notation $b(i)
:=
\frac{q_1}{q_1+q_2}\Bigl(\frac12+i\Bigr)
+
\frac{q_2}{q_1+q_2}\Bigl(\frac12-i\Bigr)
-
U^0(i)$ for the receiver's gain from buying a controversial product after infinitely many ratings, relative to the outside option.

\textbf{Step 1:} \textit{the value of the infinite-learning system is
\[
V_\infty
=
q_g(1-q_g-Q)+2Q\,\mathbb{E}\bigl[b(i)^+\bigr],
\qquad
b(i)^+:=\max\{b(i),0\}.
\]}

By Lemma~\ref{lemma:multi_recommendations}, infinite learning perfectly reveals objectively good and bad products, while mixed ratings identify a controversial product but not its variant. Hence, (i) good products are always bought, (ii) bad products are never bought, and (iii) controversial products are bought exactly by those types with $b(i)\ge 0$. Thus, the result holds.\footnote{The first term is the gain from correctly identifying objectively good products: a good product yields payoff~1, while the outside option yields $\mathbb{E}[U^0(i)]=q_g+Q$ under symmetric $F$, so the gain is $q_g(1-q_g-Q)$. Objectively bad products contribute zero because they are identified and not bought.} 

\textbf{Step 2:} \textit{The optimally designed single-rating system gives the value }
\begin{equation*}
    V_1^*= q_g(1-q_g-Q)+2Q\,\max\{\mathbb{E}[b(i)],0\}.
\end{equation*}
Under Assumption~\ref{ass:symmetry}, Proposition~\ref{prop:solution_symmetric_population} implies that the optimal single-standard design is extreme:
\begin{itemize}
    \item if $\sigma<1$, the designer chooses $R\to 0$, so all controversial products are bought;
    \item if $\sigma>1$, the designer chooses $R\to 1$, so no controversial product is bought;
    \item if $\sigma=1$, all standards are optimal.
\end{itemize}
Hence the benchmark value is $V_1^*= q_g(1-q_g-Q)+2Q\,\max\{\mathbb{E}[b(i)],0\}.$ 

\textbf{Step 3:} $V_\infty = V_1^*
\quad\Longleftrightarrow\quad
b(i)$ \textit{has a constant sign almost surely}.

 Since $b=b^+-b^-$ with $b^-:=\max\{-b,0\}$, we have $\mathbb{E}[b^+] \ge \max\{\mathbb{E}[b],0\}.$ Moreover, 
\begin{itemize}
    \item if $\mathbb{E}[b]\ge 0$, then $\mathbb{E}[b^+]-\mathbb{E}[b]=\mathbb{E}[b^-]$, so equality requires $b^-=0$ a.s.;
    \item if $\mathbb{E}[b]\le 0$, then $\mathbb{E}[b^+]=0$ requires $b^+=0$ a.s.
\end{itemize}
Thus the claim holds.

\textbf{Step 4:} $b(i)$ \textit{has a constant sign almost surely if and only if one of the following holds:}
    \begin{align*}
        \text{(i)}\ \lambda = 1;\quad
       \text{(ii)}\ \lambda > 1\ \text{and}\ \sigma \notin \left(\frac{1}{\lambda}, \lambda\right);\quad
        \text{(iii)}\ \lambda < 1\ \text{and}\ \sigma \notin \left(\lambda, \frac{1}{\lambda}\right).
    \end{align*}

Using $q_1=\frac{2Q\lambda}{\lambda+1}$, $q_2=\frac{2Q}{\lambda+1}$, and $q_g=\frac{(1-2Q)\sigma}{1+\sigma}$:
\[
b(i)
=
\frac{(1-2Q)(\lambda-1)}{\lambda+1}\,(i-\tilde i_\infty),
\qquad
\tilde i_\infty
=
\frac{\sigma-1}{2(1+\sigma)}\frac{\lambda+1}{\lambda-1}.
\]

\textbf{Case 1: $\lambda=1$.}
Then $b(i)$ is constant in $i$, so it has a constant sign a.s. Hence $V_\infty=V_1^*$.

\textbf{Case 2: $\lambda>1$.}
Then $b(i)$ is strictly increasing in $i$. Because $F$ has full support on $[-1/2,1/2]$, $b(i)$ has a constant sign a.s.\ iff $\tilde i_\infty\le -1/2$ or $\tilde i_\infty\ge 1/2$. The first inequality is equivalent to $\sigma\le 1/\lambda$, and the second to $\sigma\ge\lambda$. Hence $V_\infty=V_1^*\iff\sigma\notin(1/\lambda,\lambda)$.

\textbf{Case 3: $\lambda<1$.}
Then $b(i)$ is strictly decreasing in $i$. By the same argument, $b(i)$ has a constant sign a.s.\ iff $\tilde i_\infty\ge 1/2$ or $\tilde i_\infty\le -1/2$. Now $\lambda-1<0$, so the direction of the inequalities reverses: $\tilde i_\infty\ge 1/2$ gives $\sigma\le\lambda$, and $\tilde i_\infty\le -1/2$ gives $\sigma\ge 1/\lambda$. Hence $V_\infty=V_1^*\iff\sigma\notin(\lambda,1/\lambda)$.

At the boundary values (e.g., $\sigma=1/\lambda$ with $\lambda>1$), $\tilde i_\infty=-1/2$, so $b(-1/2)=0$ and $b(i)>0$ for $i>-1/2$. Under full support with continuous $F$, zero at a single point has measure zero, so $b\ge 0$ a.s. The boundary belongs to the equality region, yielding the open intervals in the statement.


\subsubsection{Distinct Sender and Receiver Distributions}\label{AppDistinct}

We prove the following result:
\begin{proposition} \label{prop:solution_distinct_population}
    Consider the model with distinct preference distributions of senders and receivers. Suppose the sender distribution $F$ satisfies Assumption~\ref{ass:symmetry}. Then, all types are obedient. Further, $V(R) = \pi^L_F[\Delta_{O,F}^L-\Delta_{S,F}^L \E_G[i]]$ is 
    \begin{enumerate}
        \item[(i)] decreasing in $R$ if $\sigma<\frac{Q + (q_1-q_2) \E_G[i]}{Q - (q_1-q_2) \E_G[i]}$;
        \item[(ii)]  increasing in $R$  if $\sigma>\frac{Q + (q_1-q_2) \E_G[i]}{Q - (q_1-q_2) \E_G[i]}$;
        \item[(iii)]  constant in $R$  if $\sigma=\frac{Q + (q_1-q_2) \E_G[i]}{Q - (q_1-q_2) \E_G[i]}$.
    \end{enumerate}
\end{proposition}
\begin{proof}
    
The value of the rating system can be written as
\begin{small}
    \begin{align*}
    V(\mathcal{R}) = \begin{cases}
        \int_{-1/2}^{1/2} (V_{A,F}(i) - V_{0,F}(i))dG(i) & \text{if } \vert\Delta_{S,F}^L \vert\leq 2\Delta_{O,F}^L\\
        \int_{-1/2}^{\tilde{i}}V_{N,F}(i) dG(i)  +\int_{\tilde{i}}^{1/2} V_{A,F}(i) dG(i) - \int_{-1/2}^{1/2} V_{0,F}(i) dG(i)  & \text{if } \Delta_{S,F}^L<-2\Delta_{O,F}^L\\
        \int_{-1/2}^{\tilde{i}}V_{A,F}(i) dG(i) +  \int_{\tilde{i}}^{1/2} V_{N,F}(i) dG(i)  - \int_{-1/2}^{1/2} V_{0,F}(i) dG(i) & \text{if } \Delta_{S,F}^L>2\Delta_{O,F}^L 
    \end{cases}
    \end{align*}
\end{small}
and the result from Proposition \ref{prop:value_of_RS} translates accordingly so that we can rewrite  $V(\mathcal{R})$ as
\begin{footnotesize}
    \begin{align*}
        &\pi^L[\Delta_{O,F}^L-\Delta_{S,F}^L \E_G[i]]  &\text{if } \vert\Delta_{S,F}^L \vert\leq 2\Delta_{O,F}^L\\
        & (1-\pi^L)G(\tilde{i}) \left[\Delta_{O,F}^D  - \Delta_{S,F}^D\E_G[i \mid i\leq \tilde{i}]\right]+ \pi^L(1-G(\tilde{i}))\left[\Delta_{O,F}^L - \Delta_{S,F}^L  \E_G[i \mid i\geq \tilde{i}]\right] & \text{if } \Delta_{S,F}^L<-2\Delta_{O,F}^L\\
     &  \pi^LG(\tilde{i})\left[\Delta_{O,F}^L  - \Delta_{S,F}^L \E_G[i \mid i\leq \tilde{i}]\right]+ (1-\pi^L)(1- G(\tilde{i})) \left[\Delta_{O,F}^D - \Delta_{S,F}^D \E_G[i \mid i\geq \tilde{i}]\right]&\text{if } \Delta_{S,F}^L>2\Delta_{O,F}^L.
\end{align*}       
\end{footnotesize}
   Since $F$ satisfies Assumption \ref{ass:symmetry}, all receivers will be obedient so that we are in the case $\vert\Delta_{S,F}^L \vert\leq 2\Delta_{O,F}^L$. Thus, we have 
    \begin{align*}
        V(\beta) &= \pi^L[\Delta_{O,F}^L-\Delta_{S,F}^L \E_G[i]] \\
        &= q_g + \beta Q - (q_g + 2 \beta Q)(q_g + Q) + (q_2-q_1)(q_g -\beta( 1-2Q))\E_G[i]
    \end{align*}
    and therefore  $V'(\beta) = Q - 2  Q(q_g + Q) - (q_2-q_1)(1-2Q)\E_G[i]$. Replacing  $q_g = ((1-2Q)\sigma)/(\sigma+1)$ and rearranging the inequalities $V'(\beta)>0$ and $V'(\beta)<0$ yields the statement in the proposition. 
\end{proof}

\subsubsection{Polarization and the Prevalence of Controversial Products}\label{App:comparative_statics}

\paragraph{Polarization.}

Under Assumption~\ref{ass:symmetry}, polarization affects the rating system through $\beta = F(1/2-R),$ the share of senders who recommend a controversial product at rating standard $R$. By Lemma~\ref{lemma:solution_symmetric_population}, the value of the rating system is strictly increasing in $\beta$ when $\sigma<1$, strictly decreasing in $\beta$ when $\sigma>1$, and independent of $\beta$ when $\sigma=1$.

Figure~\ref{fig:mean-spread} illustrates such a change in $\beta$. The intuition is that a higher $\beta$ makes likes more frequent but less objectively informative. To see this, suppose $\sigma<1$ and $R>1/2$. As shown in Figure~\ref{fig:senders}, when $R>1/2$, only sufficiently extreme types issue likes for controversial products. Hence a higher $\beta$ means that likes are more likely to stem from controversial rather than objectively good products. This raises the probability of likes while lowering their objective value. When good products are rare, the frequency effect dominates; when good products are common, the informativeness effect dominates.

    \begin{figure}
        \centering
        \caption{Population polarization}
            \label{fig:mean-spread}
\begin{tikzpicture}[scale=3, x=2cm, y=1cm]
    \draw[->] (-0.6,0) -- (0.6,0) node[right] {$i$};
    \draw[->] (0,0) -- (0,1.2);
    
    \draw[dashed] (0,0) -- (0,1);
    
    \draw (-0.5,0) -- (0.5,0);
    \foreach \x in {-0.5,-0.25,0,0.25,0.5}
        \draw (\x,0.05) -- (\x,-0.05) node[below, font=\footnotesize] {\x};
    
    \draw[dotted, thick, black] (-0.5,0) -- (-0.25,0.5);
    \draw[dotted,thick, black] (-0.25,0.5) -- (0.25,0.5);
    \draw[dotted,thick, black] (0.25,0.5) -- node[below right, black, font=\footnotesize] {$\beta=\frac{1}{2}$} (0.5,1);
    
    \draw[dashed,thick, black] (-0.5,0) -- (-0.25,0);
    \draw[dashed,thick, black] (-0.25,0) -- (0.25,1);
    \draw[dashed,thick, black] (0.25,1) -- node[above, black, font=\footnotesize] {$\beta=0$} (0.5,1);
    
    \node[left, font=\footnotesize] at (0,1) {$1$};
    \node[left, font=\footnotesize] at (0,0.5) {$\frac{1}{2}$};
\end{tikzpicture}

        \flushleft \footnotesize \textit{Notes:} We have fixed $R=3/4$. The dotted and dashed lines both correspond to symmetric type distributions $F$, where $\beta=1/2$ for the dotted line and $\beta=0$ for the dashed line. A move from $\beta=0$ to $\beta=1/2$ illustrates a symmetric shift in the type distribution that increases the share of types willing to recommend a controversial product at the fixed standard $R=3/4$.
    \end{figure}
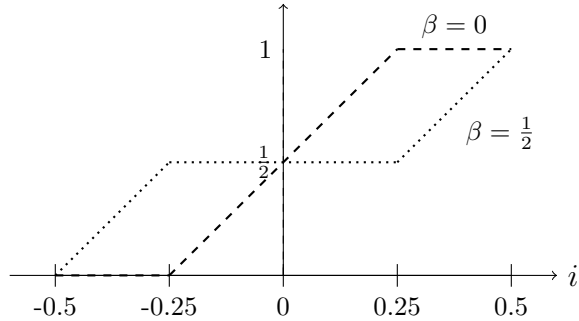

The impact of preference heterogeneity on sales is straightforward. Under Assumption~\ref{ass:symmetry}, all types are obedient so sales equal the probability of a like and are therefore increasing in $\beta$. Thus, shifts that increase $\beta$ also increase sales.

\paragraph{Prevalence of Controversial Products.}

The following lemma gives the comparative statics of $V$ in $Q$:
\begin{lemma}\label{lemma:comparative_statics_Q}
    Suppose Assumption \ref{ass:symmetry}~holds.
    \begin{itemize}
        \item[(i)] If $3\sigma-\beta-\sigma^2+\sigma^2\beta>0$, then the value of the rating system is decreasing in $Q$. In particular, this is the case when $\sigma=1$.
        \item[(ii)] If $3\sigma-\beta-\sigma^2+\sigma^2\beta<0$, then $Q^*:=\frac{3\sigma-\beta-\sigma^2+\sigma^2\beta}{4\sigma-4\beta-4\sigma^2+4\sigma^2\beta}\in (0,\frac{1}{2})$ maximizes $V(R)$.
    \end{itemize}
\end{lemma}

\begin{proof}

We first take the derivative of $V$ with respect to $Q$ to obtain
\begin{gather}
\frac{\partial V}{\partial Q}=
\frac{1}{\left(  \sigma+1\right)  ^{2}}\left(  \beta-3\sigma+4Q\sigma
-4Q\beta+\sigma^{2}-4Q\sigma^{2}-\sigma^{2}\beta+4Q\sigma^{2}\beta\right)
\nonumber
\end{gather}

Grouping the $Q$-terms, this can be written in affine form as
\begin{align*}
    \frac{\partial V}{\partial Q} = \frac{-(3\sigma - \beta - \sigma^2 + \sigma^2\beta) + 4Q(\sigma - \beta - \sigma^2 + \sigma^2\beta)}{(\sigma+1)^2},
\end{align*}
which is linear in $Q$.

\smallskip
\noindent\textit{Proof of (i).} Suppose $3\sigma - \beta - \sigma^2 + \sigma^2\beta > 0$. Straightforward calculations show that $\partial V/\partial Q$ is negative for $Q=0$ and negative for $Q=1/2$.
Because it is affine in $Q$, it is thus negative throughout $(0,1/2)$. Hence $V$ is decreasing in $Q$, proving (i). In particular, at $\sigma=1$, $3\sigma-\beta-\sigma^2+\sigma^2\beta = 2$ for every $\beta\in[0,1]$, so (i) applies.

\smallskip
\noindent\textit{Proof of (ii).} Suppose $3\sigma-\beta-\sigma^2+\sigma^2\beta < 0$.
Setting $\partial V/\partial Q$ equal to zero and solving for the candidate solution we obtain
\[
Q^{\ast}=\frac{3\sigma-\beta-\sigma^{2}+\sigma^{2}\beta}{4\sigma
-4\beta-4\sigma^{2}+4\sigma^{2}\beta}%
\]
Next, the SOC reads $4\sigma-4\beta-4\sigma^{2}+4\sigma^{2}\beta<0,$ where the l.h.s. of the SOC is the denominator of $Q^{\ast}$. Using the identity $4\sigma-4\beta-4\sigma^{2}+4\sigma^{2}\beta = 4(3\sigma-\beta-\sigma^{2}+\sigma^{2}\beta) - 8\sigma$, the condition $3\sigma-\beta-\sigma^{2}+\sigma^{2}\beta<0$ (together with $\sigma>0$) implies the SOC holds. Hence, the candidate solution is positive. It remains to show that $Q^{\ast}<\frac{1}{2}$. The requirement $\frac{3\sigma-\beta-\sigma^{2}+\sigma^{2}\beta}{4\sigma-4\beta-4\sigma^{2}+4\sigma^{2}\beta}<1/2$ becomes
\[
2\sigma-2\beta-2\sigma^{2}+2\sigma^{2}\beta<3\sigma-\beta-\sigma^{2}%
+\sigma^{2}\beta,
\]
because the denominator of the ratio is negative. Equivalently, $\left(  \sigma+1\right)  \left(  \sigma(1-\beta)+\beta\right)  >0,$ which always holds for $\beta \leq 1$.
\end{proof}

As an immediate consequence we obtain:
\begin{corollary}\label{cor:comparative_statics_Q}Suppose Assumption \ref{ass:symmetry}~holds. Then for every $\beta \in (0,1)$, there exist $\underline{\sigma}(\beta) \in (0,1)$ and $\overline{\sigma}(\beta)>1$ such that 
    \begin{itemize}
        \item[(i)] the value of the rating system is decreasing in $Q$ on  $(\underline{\sigma}(\beta),\overline{\sigma}(\beta))$.
        \item[(ii)] For $\sigma<\underline{\sigma} $ and  $\sigma>\overline{\sigma} $, the value of the rating system is maximized for some $Q^*>0 $.

    \end{itemize}
\end{corollary}
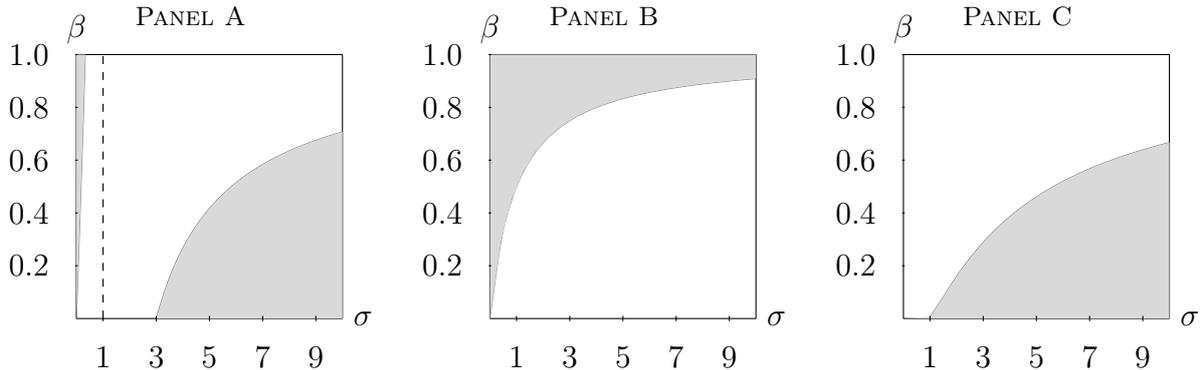
\begin{figure}[htbp]
    \centering
    \caption{Illustration of Corollary~\ref{cor:comparative_statics_Q}}
        \label{fig:prevalence_controversial}
    \begin{subfigure}[b]{0.32\textwidth}
    \captionsetup{skip=-0.4cm}
      \subcaption{Panel A}      
        \centering
\begin{tikzpicture}[x=0.35cm,y=3.5cm]
    \draw[-] (0,0) -- (10,0) node[right] {$\sigma$};
    \draw[-] (0,0) -- (0,1) node[above] {$\beta$};
    \draw[-] (10,0) -- (10,1);
    \draw[-] (10,1) -- (0,1);
    
    \draw[domain=0:1/3,smooth,variable=\s,gray] plot ({\s},{(-3*\s + \s^2)/(-1 + \s^2)});
    
    \draw[domain=3:10,smooth,variable=\s,gray] plot ({\s},{(-3*\s + \s^2)/(-1 + \s^2)});
    
    \fill[gray!30, domain=0:1/3, variable=\s]
        plot ({\s},{(-3*\s + \s^2)/(-1 + \s^2)})
        -- (0,1) -- (0,0) --cycle;
    
    \fill[gray!30, domain=3:10, variable=\s]
        plot ({\s},{(-3*\s + \s^2)/(-1 + \s^2)})
        -- (10,0) -- cycle;
    
    \draw[dashed] (1,0) -- (1,1);
    
    \foreach \x in {1,3,5,7,9} {
        \draw (\x,0.01) -- (\x,-0.01) node[below=0.2cm] {$\x$};
    }
    \foreach \y in {0.2,0.4,0.6,0.8,1.0} {
        \draw (0.05,\y) -- (-0.05,\y) node[left=0.2cm] {$\y$};
    }
\end{tikzpicture}

    \end{subfigure}
    \hfill
    \begin{subfigure}[b]{0.32\textwidth}
    \captionsetup{skip=-0.4cm}
      \subcaption{Panel B}

        \centering
        \begin{tikzpicture}[x=0.35cm,y=3.5cm]
            \draw[-] (0,0) -- (10,0) node[right] {$\sigma$};
            \draw[-] (0,0) -- (0,1) node[above] {$\beta$};
            \draw[-] (10,0) -- (10,1);
            \draw[-] (10,1) -- (0,1);
            
            \draw[domain=0:10,smooth,variable=\s,gray] plot ({\s},{\s/(1 + \s)});
            
            \fill[gray!30, domain=0:10, variable=\s]
                plot ({\s},{\s/(1 + \s)})
                -- (10,1) -- (0,1) -- (0,0) --cycle;
            
            \foreach \x in {1,3,5,7,9} {
                \draw (\x,0.01) -- (\x,-0.01) node[below=0.2cm] {$\x$};
            }
            \foreach \y in {0.2,0.4,0.6,0.8,1.0} {
                \draw (0.05,\y) -- (-0.05,\y) node[left=0.2cm] {$\y$};
            }
        \end{tikzpicture}
        
    \end{subfigure}
    \hfill
    \begin{subfigure}[b]{0.32\textwidth}
    \captionsetup{skip=-0.4cm}
     \subcaption{Panel C}
        \centering
        \begin{tikzpicture}[x=0.35cm,y=3.5cm]
            \draw[-] (0,0) -- (10,0) node[right] {$\sigma$};
            \draw[-] (0,0) -- (0,1) node[above] {$\beta$};
            \draw[-] (10,0) -- (10,1);
            \draw[-] (10,1) -- (0,1);
            
            \draw[domain=1.01:10,smooth,variable=\s,gray] plot ({\s},{(\s*(16.5 + 8.5*\s - 12.5*sqrt(1.64 + 0.36*\s)*sqrt(1 + \s)))/(-1 + \s^2)});
            
            \fill[gray!30, domain=1.01:10, variable=\s]
                plot ({\s},{(\s*(16.5 + 8.5*\s - 12.5*sqrt(1.64 + 0.36*\s)*sqrt(1 + \s)))/(-1 + \s^2)})
                -- (10,0) -- (0,0) --cycle;
            
            \foreach \x in {1,3,5,7,9} {
                \draw (\x,0.01) -- (\x,-0.01) node[below=0.2cm] {$\x$};
            }
            \foreach \y in {0.2,0.4,0.6,0.8,1.0} {
                \draw (0.05,\y) -- (-0.05,\y) node[left=0.2cm] {$\y$};
            }
        \end{tikzpicture}
       
    \end{subfigure}
   \flushleft\footnotesize \textit{Notes:} In Panel~A, the shaded area indicates the parameter space for which an interior solution for $Q$ arises by Corollary \ref{cor:comparative_statics_Q}. In Panel~B, the shaded area indicates where $\pi^L$ is increasing in $Q$. In Panel~C, the shaded area indicates where $\Delta_O^L$ is increasing in $Q$, illustrated for $Q=0.1$.
\end{figure}
Figure~\ref{fig:prevalence_controversial} illustrates the result. In the white area in Panel~A, $V(Q)$ is decreasing in $Q$; in the shaded area, an interior maximizer exists. When objectively good and bad products are similarly likely ($\sigma$ is close to $1$), a higher prevalence of controversial products indeed reduces $V(Q)$. For very high or low values of $\sigma$, however, this need not be true. This surprising result reflects the multiplicative structure of $V(Q)$. Under Assumption~\ref{ass:symmetry}, $V(Q)$ is the product of the buying probability $\pi^L$ and the objective value of the rating per buyer, $\Delta_O^L$. The effect of $Q$ on $\pi^L$ is negative only if $\sigma$ is sufficiently large (see Panel~B). Intuitively, as $Q$ increases, objectively good products are less likely, reducing the probability of likes. However, there are now more likes stemming from controversial products, as these have become more likely. $\beta$ and $\sigma$ determine which of these effects dominates. When $\beta$ is high, the greater likelihood that controversial products receive likes increases the total probability of likes, strengthening the positive effect. When $\sigma$ is high, the lower share of objectively good products dominates, strengthening the negative effect. 
Similarly, one can show that the effect of $Q$ on $\Delta_O^L$ becomes negative only for small $\sigma$ (see Panel C), reflecting the increased probability that a product with a like is controversial.
\footnote{There is also a positive effect of higher $Q$, reflecting a change in the value of the outside option. The values of $\sigma$ and $\beta$ determine which effect dominates.}

\newpage
\bibliographystyle{ecta}
\bibliography{references_reco.bib}

\begin{thebibliography}{26}
\newcommand{\enquote}[1]{``#1''}
\expandafter\ifx\csname natexlab\endcsname\relax\def\natexlab#1{#1}\fi

\bibitem[\protect\citeauthoryear{Acemoglu, Makhdoumi, Malekian, and
  Ozdaglar}{Acemoglu et~al.}{2022}]{acemoglu2022learning}
\textsc{Acemoglu, D., A.~Makhdoumi, A.~Malekian, and A.~Ozdaglar} (2022):
  \enquote{Learning from reviews: The selection effect and the speed of
  learning,} \emph{Econometrica}, 90, 2857--2899.

\bibitem[\protect\citeauthoryear{Banerjee}{Banerjee}{1992}]{banerjee1992simple}
\textsc{Banerjee, A.~V.} (1992): \enquote{A Simple Model of Herd Behavior,}
  \emph{The Quarterly Journal of Economics}, 107, 797--817.

\bibitem[\protect\citeauthoryear{Bikhchandani, Hirshleifer, and
  Welch}{Bikhchandani et~al.}{1992}]{bikhchandani1992theory}
\textsc{Bikhchandani, S., D.~Hirshleifer, and I.~Welch} (1992): \enquote{A
  Theory of Fads, Fashion, Custom, and Cultural Change as Informational
  Cascades,} \emph{Journal of Political Economy}, 100, 992--1026.

\bibitem[\protect\citeauthoryear{Bohren, Hakimov, and Santos-Pinto}{Bohren
  et~al.}{2025}]{bohren2025beyond}
\textsc{Bohren, N., R.~Hakimov, and L.~Santos-Pinto} (2025): \enquote{Beyond
  the Stars: Exploring the Welfare Effects of Ratings in Differentiated
  Markets,} Working paper, University of Lausanne; AEA Registry
  AEARCTR-0011386.

\bibitem[\protect\citeauthoryear{Bolton, Greiner, and Ockenfels}{Bolton
  et~al.}{2013}]{bolton2013engineering}
\textsc{Bolton, G., B.~Greiner, and A.~Ockenfels} (2013): \enquote{Engineering
  trust: reciprocity in the production of reputation information,}
  \emph{Management Science}, 59, 265--285.

\bibitem[\protect\citeauthoryear{Bourreau and Gaudin}{Bourreau and
  Gaudin}{2022}]{gaudin2022streaming}
\textsc{Bourreau, M. and G.~Gaudin} (2022): \enquote{Streaming Platform and
  Strategic Recommendation Bias,} \emph{Journal of Economics and Management
  Strategy}, 31, 25--47.

\bibitem[\protect\citeauthoryear{Carnehl, Stenzel, and Schmidt}{Carnehl
  et~al.}{2024}]{carnehl2024pricing}
\textsc{Carnehl, C., A.~Stenzel, and P.~Schmidt} (2024): \enquote{Pricing for
  the Stars: Dynamic Pricing in the Presence of Rating Systems,}
  \emph{Management Science}, 70, 1755--1772.

\bibitem[\protect\citeauthoryear{Chakraborty and Harbaugh}{Chakraborty and
  Harbaugh}{2010}]{chakraborty2010cheap}
\textsc{Chakraborty, A. and R.~Harbaugh} (2010): \enquote{Persuasion by Cheap
  Talk,} \emph{American Economic Review}, 100, 2361--2382.

\bibitem[\protect\citeauthoryear{Che and H{\"o}rner}{Che and
  H{\"o}rner}{2018}]{che2018recommender}
\textsc{Che, Y.-K. and J.~H{\"o}rner} (2018): \enquote{Recommender systems as
  mechanisms for social learning,} \emph{The Quarterly Journal of Economics},
  133, 871--925.

\bibitem[\protect\citeauthoryear{Chen, Hitt, Hong, and Wu}{Chen
  et~al.}{2021}]{chen2021reviews}
\textsc{Chen, P.-Y., L.~M. Hitt, Y.~Hong, and S.~Wu} (2021): \enquote{Measuring
  Product Type and Purchase Uncertainty with Online Product Ratings: A
  Theoretical Model and Empirical Application,} \emph{Information Systems
  Research}, 32, 1470--1489.

\bibitem[\protect\citeauthoryear{Hopenhayn and Saeedi}{Hopenhayn and
  Saeedi}{2026}]{hopenhayn2026optimal}
\textsc{Hopenhayn, H. and M.~Saeedi} (2026): \enquote{Optimal Simple Ratings,}
  \emph{The Journal of Industrial Economics}.

\bibitem[\protect\citeauthoryear{{Hospitable}}{{Hospitable}}{2024}]{Hospitable2024}
\textsc{{Hospitable}} (2024): \enquote{Understanding Airbnb Ratings: How They
  Work and How to Improve Them,} Accessed: March 12, 2025.

\bibitem[\protect\citeauthoryear{Ifrach, Maglaras, Scarsini, and
  Zseleva}{Ifrach et~al.}{2019}]{ifrach2019bayesian}
\textsc{Ifrach, B., C.~Maglaras, M.~Scarsini, and A.~Zseleva} (2019):
  \enquote{Bayesian Social Learning from Consumer Reviews,} \emph{Operations
  Research}, 67, 1209--1221.

\bibitem[\protect\citeauthoryear{Johnen and Ng}{Johnen and
  Ng}{2024}]{JohnenNg2024}
\textsc{Johnen, J. and R.~Ng} (2024): \enquote{Harvesting Ratings,} Working
  Paper, CORE/LIDAM, UCLouvain and University of Mannheim.

\bibitem[\protect\citeauthoryear{Kamenica and Gentzkow}{Kamenica and
  Gentzkow}{2011}]{kamenica2011bayesian}
\textsc{Kamenica, E. and M.~Gentzkow} (2011): \enquote{Bayesian Persuasion,}
  \emph{American Economic Review}, 101, 2590--2615.

\bibitem[\protect\citeauthoryear{Lafky and Ng}{Lafky and
  Ng}{2024}]{lafky2024ratings}
\textsc{Lafky, J. and R.~Ng} (2024): \enquote{Ratings with Heterogeneous
  Preferences,} Discussion Paper No.\ 594, CRC TR 224, University of Bonn and
  University of Mannheim.

\bibitem[\protect\citeauthoryear{Li and Hitt}{Li and Hitt}{2008}]{li2008self}
\textsc{Li, X. and L.~M. Hitt} (2008): \enquote{Self-Selection and Information
  Role of Online Product Reviews,} \emph{Information Systems Research}, 19,
  456--474.

\bibitem[\protect\citeauthoryear{Lizzeri}{Lizzeri}{1999}]{lizzeri1999information}
\textsc{Lizzeri, A.} (1999): \enquote{Information Revelation and Certification
  Intermediaries,} \emph{The RAND Journal of Economics}, 30, 214--231.

\bibitem[\protect\citeauthoryear{Peitz and Sobolev}{Peitz and
  Sobolev}{2025}]{PeitzSobolev2024}
\textsc{Peitz, M. and A.~Sobolev} (2025): \enquote{Inflated Recommendations,}
  \emph{The RAND Journal of Economics}.

\bibitem[\protect\citeauthoryear{Rayo and Segal}{Rayo and
  Segal}{2010}]{rayo2010optimal}
\textsc{Rayo, L. and I.~Segal} (2010): \enquote{Optimal Information
  Disclosure,} \emph{Journal of Political Economy}, 118, 906--948.

\bibitem[\protect\citeauthoryear{Shi and Raghu}{Shi and
  Raghu}{2020}]{shi2020economic}
\textsc{Shi, Z.~M. and T.~S. Raghu} (2020): \enquote{An Economic Analysis of
  Product Recommendation in the Presence of Quality and Taste-Match
  Heterogeneity,} \emph{Information Systems Research}, 31, 399--411.

\bibitem[\protect\citeauthoryear{Smirnov and Starkov}{Smirnov and
  Starkov}{2025}]{smirnov2024designing}
\textsc{Smirnov, A. and E.~Starkov} (2025): \enquote{Designing Social
  Learning,} \emph{European Economic Review}, 178.

\bibitem[\protect\citeauthoryear{Smith and S{\o}rensen}{Smith and
  S{\o}rensen}{2000}]{smith2000pathological}
\textsc{Smith, L. and P.~S{\o}rensen} (2000): \enquote{Pathological Outcomes of
  Observational Learning,} \emph{Econometrica}, 68, 371--398.

\bibitem[\protect\citeauthoryear{Sun}{Sun}{2012}]{sun2011variance}
\textsc{Sun, M.} (2012): \enquote{How Does the Variance of Product Ratings
  Matter?} \emph{Management Science}, 58, 696--707.

\bibitem[\protect\citeauthoryear{Tadelis}{Tadelis}{2016}]{tadelis2016reputation}
\textsc{Tadelis, S.} (2016): \enquote{Reputation and feedback systems in online
  platform markets,} \emph{Annual Review of Economics}, 8, 321--340.

\bibitem[\protect\citeauthoryear{{Uber}}{{Uber}}{2024}]{Uber2024}
\textsc{{Uber}} (2024): \enquote{Understanding Driver Ratings,} Accessed: March
  12, 2025.

\end{thebibliography}

\clearpage
\setcounter{page}{1}
\setcounter{section}{0}
\setcounter{subsection}{0}
\setcounter{subsubsection}{0}
\setcounter{equation}{0}
\setcounter{figure}{0}
\setcounter{table}{0}
\setcounter{footnote}{0}
\setcounter{theorem}{0}
\setcounter{proposition}{0}
\setcounter{corollary}{0}
\setcounter{lemma}{0}
\renewcommand{\thesection}{OA}
\renewcommand{\thesubsection}{OA.\arabic{subsection}}
\renewcommand{\thesubsubsection}{\thesubsection.\arabic{subsubsection}}
\renewcommand{\theequation}{OA.\arabic{equation}}
\renewcommand{\thefigure}{OA.\arabic{figure}}
\renewcommand{\thetable}{OA.\arabic{table}}
\renewcommand{\theproposition}{OA.\arabic{proposition}}
\renewcommand{\thecorollary}{OA.\arabic{corollary}}
\renewcommand{\thelemma}{OA.\arabic{lemma}}
\section{Online Appendix}\label{Sec:OnlineAppendix}
\subsection{Value of Rating Systems}
\subsubsection{Universal Obedience in Example~\ref{ass:asymmetry}}\label{App:universal_acceptance_cutoff}

Recall that, for the economic analysis, we maintain the assumptions that $R\in(0,1)$ and $q_s>0$ for $s $, implying $Q<1/2$. For continuity and convexity arguments, we occasionally extend functions of $R$ and $Q$ to the closures $[0,1]$ and $[0,1/2]$.

The following two lemmas supply the exact cutoff $\bar Q(a,\sigma)$ invoked informally before Proposition~\ref{prop:benchmark_characterization}, clarifying under which conditions the requirement that all receivers are obedient is satisfied.\footnote{Whenever we write $\bar Q(a,\sigma)=1/2$, this should be understood as a boundary statement: universal acceptance then holds for every admissible $Q<1/2$.} Lemma~\ref{lemma:universal_acceptance_cutoff} establishes existence, uniqueness, and an explicit characterization for $a>1$. Lemma~\ref{lemma:cutoff_monotonicity_a} describes how the cutoff varies with $a$.

\begin{lemma}\label{lemma:universal_acceptance_cutoff}
Suppose we are in Example~\ref{ass:asymmetry}. Then, for every $(a,\sigma) \in \mathcal{R}_+ \times \mathcal{R}_+$, there exists a unique cutoff $\bar Q(a,\sigma)\in(0,1/2]$ such that receivers are universally obedient for $(Q,a,\sigma)$ if and only if $Q\leq \bar Q(a,\sigma)$. Moreover:
\begin{enumerate}
    \item[(i)] For $a=1$ and every $(Q,\sigma) \in  (0,1/2) \times \mathcal{R}_+ $, receivers are universally obedient for $(Q,a,\sigma)$.
    \item[(ii)] For $a>1$ and $(Q,\sigma) \in  (0,1/2) \times \mathcal{R}_+ $, receivers are universally obedient for $(Q,a,\sigma)$ if and only if
    \[
        N(R;Q,a,\sigma):= q_gq_b-Q^2+Q(q_b+Q)(1-R)^a+Q(q_g+Q)R^a \geq 0
    \]
    for all $R\in(0,1)$.
    \item[(iii)]  For $a<1$, receivers are universally obedient if and only if
    \[
        \tilde N(R;Q,a,\sigma):= q_gq_b-Q^2+Q(q_b+Q)(1-R^a)+Q(q_g+Q)\bigl(1-(1-R)^a\bigr) \geq 0
    \]
    for all $R\in(0,1)$.
\end{enumerate}
\end{lemma}

\begin{proof}
In Steps 1-7, we establish the lemma for $a\geq 1$. Steps 1 and 2 hold for all $a$. In Example~\ref{ass:asymmetry},  $F(i)=(i+1/2)^a$ and $q_1=q_2=Q$, so $\phi_1(R)=1-R^a$ and $\phi_2(R)=(1-R)^a$. To simplify notation, we set $s_a(R):=\phi_1(R)+\phi_2(R)$ and $b_a(R):=\phi_2(R)-\phi_1(R)=R^a+(1-R)^a-1$, 

\textbf{Step 1.} \textit{We have}
\begin{equation*}
    \Delta_S^L(R)\ =\ \frac{Q\,b_a(R)}{\pi^L(R)},\qquad
    \Delta_O^L(R)\ =\ \frac{q_g+\tfrac{Q}{2}s_a(R)}{\pi^L(R)}-(q_g+Q).
\end{equation*}
\textbf{Proof of Step 1: }In Example~\ref{ass:asymmetry}, the probability of like is $\pi^L(R)=q_g+Qs_a(R)>0$. The posteriors after a like are $P_g^L=q_g/\pi^L$, $p_1^L=Q\phi_1/\pi^L$, and $p_2^L=Q\phi_2/\pi^L$, so Step 1 follows from by Definition~\ref{def:effects}.
\medskip

\textbf{Step 2.}
(i) \textit{For} $a>1$, $\Delta_S^L(R)<0$. (ii) \textit{For} $a=1$,   $\Delta_S^L\equiv 0$,\textit{ so that all types are universally obedient}. (iii) \textit{For } $a<1$, $\Delta_S^L(R)>0$.

\textbf{Proof of Step 2}: For $a>1$, strict convexity of $x\mapsto x^a$ on $[0,1]$ gives $b_a(R)<0$ on $(0,1)$, hence $\Delta_S^L(R)<0$. For $a=1$, $b_a\equiv 0$, so $\Delta_S^L\equiv 0$ and Lemma~\ref{prop:pop_symmetry_allfollow} implies that receivers are universally obedient; in particular $\bar Q(1,\sigma)=1/2$, proving~(ii). For $a<1$, strict concavity of $x\mapsto x^a$ gives $b_a(R)>0$, hence $\Delta_S^L(R)>0$.

\medskip
\textbf{Step 3.} \textit{For $a>1$, the objective rating  effect $\Delta_O^L(R)>0$ is strictly positive}\textit{ for all }$R\in[0,1]$.

\textbf{Proof of Step 3:} Multiplying $\Delta_O^L(R)$ by $\pi^L(R)$, substituting $\pi^L=q_g+Qs_a$ and rearranging yields $\pi^L(R)\Delta_O^L(R)\ =\ q_gq_b+\frac{Q}{2}\bigl(s_a(R)q_b+(2-s_a(R))q_g\bigr).$
Since $0\leq s_a(R)\leq 2$ and $q_g,q_b>0$ for $Q<1/2$, both summands are nonnegative and the first is strictly positive. Hence $\Delta_O^L(R)>0$ for all $R\in(0,1)$, and by continuity the same inequality extends to the boundary points $R=0$ and $R=1$.

\medskip
\textbf{Step 4.}
\textit{For} $(Q,a,\sigma)$ with $a>1$, \textit{receivers are universally obedient if and only if} $N(R;Q,a,\sigma)\geq 0$ \textit{for all} $R\in(0,1)$.

\textbf{Proof of Step 4:} By Corollary~\ref{cor:follow}, all receiver types are obedient for the standard $R$ if and only if $|\Delta_S^L(R)|\leq 2\Delta_O^L(R)$. Since $\Delta_S^L<0$ for $a>1$, this is equivalent to $\Delta_O^L(R)+\tfrac12\Delta_S^L(R)\geq 0$. Multiplying by $\pi^L(R)>0$ and substituting the closed forms from Step~1 gives
\begin{align*}
    \pi^L(R)\bigl(\Delta_O^L(R)+\tfrac12\Delta_S^L(R)\bigr)\ =\ q_g+\frac{Q}{2}\bigl(s_a(R)+b_a(R)\bigr)-(q_g+Q)\pi^L(R) \geq 0
\end{align*}
Using $s_a(R)+b_a(R)=2(1-R)^a$ and $\pi^L(R)=q_g+Qs_a(R)=q_g+Q\bigl(1-R^a+(1-R)^a\bigr)$, this rearranges to $\pi^L(R)\bigl(\Delta_O^L(R)+\tfrac12\Delta_S^L(R)\bigr)\ =\ N(R;Q,a,\sigma) \geq 0.$
Therefore, for $(Q,a,\sigma)$ receivers are universally obedient if and only if $N(R;Q,a,\sigma)\geq 0$ for all $R\in(0,1)$. Because $N$ extends continuously to $[0,1]$, this is equivalent to $N(R;Q,a,\sigma)\geq 0$ for all $R\in[0,1]$.

\medskip
\textbf{Step 5.}
\textit{For all} $(a,\sigma) \in (1,\infty) \times \mathcal{R}_+$, \textit{there exists }$\varepsilon(a,\sigma)>0$ \textit{such that all receivers are obedient for all} $(a,\sigma,Q)$ \textit{such that} $Q \leq \varepsilon(a,\sigma)$  \textit{and all} $R\in (0,1)$.

\textbf{Proof of Step 5:} Since $|b_a(R)|$ is continuous on the compact interval $[0,1]$, $C_a:=\max_{R\in[0,1]}|b_a(R)|<\infty$. Using $\pi^L(R)\geq q_g$ from Step~1, $|\Delta_S^L(R)|\leq QC_a/q_g$ for every $R$. From Step~3, $\pi^L\Delta_O^L\geq q_gq_b$, and $\pi^L(R)\leq q_g+2Q$ since $s_a\leq 2$, so $\Delta_O^L(R)\geq q_gq_b/(q_g+2Q)$. As $Q\downarrow 0$,
\begin{align*}
    \frac{QC_a}{q_g}\ \to\ 0,\qquad \frac{2q_gq_b}{q_g+2Q}\ \to\ \frac{2}{1+\sigma}\ >\ 0.
\end{align*}
Hence there exists $\varepsilon(a,\sigma)>0$ such that $|\Delta_S^L(R)|<2\Delta_O^L(R)$ uniformly in $R\in[0,1]$ for every $Q\leq\varepsilon(a,\sigma)$. By Corollary \ref{cor:follow}, receivers are universally obedient for every such $Q,a,\sigma$.

\medskip
\textbf{Step 6.}
\textit{Define}
$\mathcal{S}(a,\sigma)\ :=\ \bigl\{Q\in[0,1/2)\ \mid \text{all $i$ are universally obedient at } Q(a,\sigma)\bigr\}$ \textit{as the obedience set. Then, for $a>1$, either}
 $\mathcal{S}(a,\sigma)=[0,\bar Q(a,\sigma)]$ \textit{for some }$\bar Q(a,\sigma) \in (0,1/2)$ \textit{or} $\mathcal{S}(a,\sigma)=[0,1/2)$.

\textbf{Proof of Step 6}

\emph{$\mathcal{S}(a,\sigma)$ is nonempty:} More generally, by Step~5, $[0,\varepsilon(a,\sigma)]\subseteq\mathcal{S}(a,\sigma)$.

\emph{$\mathcal{S}(a,\sigma)$ is downward closed.} By Corollary \ref{cor:follow}, all receivers are obedient for the standard $R$ if and only if $\tilde i(R,Q)\leq -1/2$. Writing $\delta:=(\sigma-1)/(\sigma+1)\in(-1,1)$, direct differentiation at fixed $\sigma$ (taking the dependence of $q_g=(1-2Q)\sigma/(1+\sigma)$ and $q_b=(1-2Q)/(1+\sigma)$ on $Q$ into account) yields
\begin{align}\label{eq:i_tilde}
    \frac{\partial \tilde i(R,Q)}{\partial Q}\;=\;-\,\frac{Q^2\,\delta\!\left(R^a-(1-R)^a+\delta\right)+\dfrac{1-\delta^2}{4}}{Q^2\,b_a(R)}.
\end{align}
Letting $x:=R^a-(1-R)^a\in[-1,1]$, the bound $\delta(x+\delta)\geq -|\delta|(1-|\delta|)$ gives
\[
Q^2\,\delta(x+\delta)+\frac{1-\delta^2}{4}\;\geq\;(1-|\delta|)\!\left(\frac{1+|\delta|}{4}-Q^2|\delta|\right)\;>\;0
\]
for $Q<1/2$, so that the numerator in $\eqref{eq:i_tilde}$ is strictly positive. For $a>1$, Step~2 gives $b_a(R)<0$ on $(0,1)$, so $\partial\tilde i(R,Q)/\partial Q>0$: $\tilde i(R,Q)$ is strictly increasing in $Q$. Hence, if $\tilde i(R,Q)\leq -1/2$ at some $Q$, the same inequality holds at every $Q'\leq Q$; equivalently, $\mathcal{S}(a,\sigma)$ is downward closed.

\emph{ $\mathcal{S}(a,\sigma)$ \textit{is relatively closed }in $[0,1/2)$}: Define $G(R,Q):=2\Delta_O^L(R,Q)-|\Delta_S^L(R,Q)|$. As $G$ is continuous on $[0,1]\times[0,1/2)$, $m(Q):=\min_{R\in[0,1]}G(R,Q)$ exists for every $q \in[0,1/2)$ by compactness and is continuous in $Q$. Hence $\mathcal{S}(a,\sigma)=\{Q\in[0,1/2):m(Q)\geq 0\}$ is relatively closed in $[0,1/2)$.

\emph{Conclusion.} Combining non-emptiness, downward closedness and relative closedness, Step 6 follows.
In both cases, $\bar Q(a,\sigma):=\sup\mathcal{S}(a,\sigma)\in(0,1/2]$.

\medskip
\textbf{Step 7.} $\bar Q(a,\sigma)<1/2$ \textit{for} $a>1$.

\textbf{Proof of Step 7:}
To rule out the possibility that the cutoff reaches the boundary $1/2$, we evaluate the continuous extension of $N$ at the boundary point $Q=1/2$. 
At $Q=1/2$, $q_g=q_b=0$, so $N(R;1/2,a,\sigma)=-\tfrac14+\tfrac14\bigl(R^a+(1-R)^a\bigr)$. By strict convexity of $x\mapsto x^a$ for $a>1$, $R^a+(1-R)^a<1$ on $(0,1)$, hence $  N(R;1/2,a,\sigma) <0$  for all $R\in(0,1).$
Now fix any $R_0\in(0,1)$. Since $N$ is continuous in $(R,Q)$, there exists $\delta>0$ such that $N(R_0;Q,a,\sigma)<0$ for every $Q\in(1/2-\delta,1/2)$. By Step~4, universal obedience fails at every such $Q$, so $\bar Q(a,\sigma)\leq 1/2-\delta<1/2$.

Putting together Steps 1-7 implies the result for $a>1$.

\textbf{Step 8.} The proposition also holds for $a<1$.

\textbf{Sketch of proof:} The case $a<1$ is analogous: The main difference is that the sign of $\Delta_S^L$ changes by Step 2. Therefore, the obedience condition $|\Delta_S^L(R)|\leq 2\Delta_O^L(R)$ becomes $\Delta_O^L-\tfrac12\Delta_S^L \geq 0$, and the resulting polynomial is again strictly convex on $[0,1]$ with strictly positive boundary values, so the same chain of arguments delivers a unique cutoff $\bar Q(a,\sigma)<1/2$.
\end{proof}

The universal obedience conditions in Lemma \ref{lemma:universal_acceptance_cutoff} are formulated as requirements that, for all $R \in (0,1)$,  $N( R;Q,a,\sigma) \geq 0$ and $\tilde N(R;Q,a,\sigma) \geq 0$ must hold, respectively. We now show that it is sufficient that the requirements hold at specific values of $R$.  

\begin{corollary}\label{cor:universal_acceptance_cutoff}
    Suppose Example~\ref{ass:asymmetry} holds.
\begin{enumerate}
    \item[(i)] For $a>1$ and $(Q,\sigma) \in  (0,1/2) \times \mathcal{R}_+ $, receivers are universally obedient for $(Q,a,\sigma)$ if and only if $N( R_{\min}(Q,a,\sigma);Q,a,\sigma) \geq 0$, where
    \begin{align*}
        R_{\min}(Q,a,\sigma)\ :=\ \frac{\big((q_b+Q)/(q_g+Q)\big)^{1/(a-1)}}{1+\big((q_b+Q)/(q_g+Q)\big)^{1/(a-1)}}.
    \end{align*}
    \item[(ii)] For $a<1$ and $(Q,\sigma) \in  (0,1/2) \times \mathcal{R}_+ $, receivers are universally obedient for $(Q,a,\sigma)$ if and only if $\tilde N( \tilde R_{\min}(Q,a,\sigma);Q,a,\sigma) \geq 0$, where
 \begin{align*}
        \tilde R_{\min}(Q,a,\sigma)\ :=\ \frac{1}{1+\big((q_b+Q)/(q_g+Q)\big)^{1/(a-1)}}.
    \end{align*}
\end{enumerate}
\end{corollary}
\begin{proof}
(i) Simple calculations show that
  \begin{align*}
    N''(R)\ =\ a(a-1)Q\bigl[(q_b+Q)(1-R)^{a-2}+(q_g+Q)R^{a-2}\bigr]\ >\ 0\quad\text{for }R\in(0,1).
\end{align*}
Thus, $N(\,\cdot\,;Q,a,\sigma)$ is strictly convex on $[0,1]$. Direct evaluation gives $ N(0;Q,a,\sigma)\ =\ q_b(q_g+Q)\ >\ 0$ and $N(1;Q,a,\sigma)\ =\ q_g(q_b+Q)\ >\ 0.$ Setting $N'(R)=0$ yields $(q_g+Q)R_{\min}^{a-1}=(q_b+Q)(1-R_{\min})^{a-1}$, so $R_{\min}$ is the expression in the statement. Since $N$ is strictly convex with positive boundary values, $N\geq 0$ on $[0,1]$ if and only if $N(R_{\min};Q,a,\sigma)\geq 0$. The result thus follows from Lemma \ref{lemma:universal_acceptance_cutoff}.

(ii) For $a<1$, direct differentiation gives
\begin{align*}
    \tilde N''(R)\ =\ aQ(1-a)\bigl[(q_b+Q)R^{a-2}+(q_g+Q)(1-R)^{a-2}\bigr]\ >\ 0\quad\text{for }R\in(0,1),
\end{align*}
so $\tilde N(\,\cdot\,;Q,a,\sigma)$ is strictly convex on $[0,1]$. Direct evaluation gives $\tilde N(0;Q,a,\sigma)=q_b(q_g+Q)>0$ and $\tilde N(1;Q,a,\sigma)=q_g(q_b+Q)>0$. Setting $\tilde N'(R)=0$ yields $(q_g+Q)(1-\tilde R_{\min})^{a-1}=(q_b+Q)\tilde R_{\min}^{a-1}$, so $\tilde R_{\min}$ is the expression in the statement. Strict convexity with positive boundary values again implies $\tilde N\geq 0$ on $[0,1]$ if and only if $\tilde N(\tilde R_{\min};Q,a,\sigma)\geq 0$. The result thus follows from Lemma~\ref{lemma:universal_acceptance_cutoff}. 
\end{proof}

A particularly simple case arises when $\sigma=1$.

\begin{corollary}\label{Cor_Obedience_Ex}
    For $\sigma=1$, receivers are universally obedient at $(Q,a,\sigma)$ if and only if $Q\leq \bar Q(a,1)$, where
    \begin{align*}
        \bar Q(a,1)\ =\ \begin{cases} 2^{a-2}, & 0<a<1,\\ 1/2, & a=1,\\ 1/[4(1-2^{-a})], & a>1.\end{cases}
    \end{align*}
\end{corollary}
\begin{proof}
The result follows from Corollary~\ref{cor:universal_acceptance_cutoff} after noting that at $\sigma=1$ we have $R_{\min}=\tilde R_{\min}=1/2$.
\end{proof}

To illustrate the result, note that $ \bar Q(2,1)=1/3$, so that, if $a=2$, obedience arises for quite high prevalence of controversial products.

\begin{lemma}\label{lemma:cutoff_monotonicity_a}
In Example~\ref{ass:asymmetry}, the cutoff $\bar Q(a,\sigma)$ from Lemma \ref{lemma:universal_acceptance_cutoff} is strictly increasing in $a$ on $(0,1)$, attains its maximum $\bar Q(1,\sigma)=1/2$ at $a=1$, and is strictly decreasing in $a$ on $(1,\infty)$.
\end{lemma}

\begin{proof}
The value at $a=1$ is Lemma~\ref{lemma:universal_acceptance_cutoff}(i). We first prove strict monotonicity on $(1,\infty)$; we then give the corresponding argument on $(0,1)$ using the polynomial $\tilde N$ from Lemma~\ref{lemma:universal_acceptance_cutoff}(iii). Throughout, we use the proof of Step 1 in Lemma~\ref{lemma:universal_acceptance_cutoff} that, for each $a>1$, the acceptance set $\mathcal{S}(a,\sigma)\subset[0,1/2)$ is nonempty, downward closed, and relatively closed, and that $\bar Q(a,\sigma)<1/2$.

\medskip
\textbf{Step 1:} $N(R;Q,a,\sigma)$ \textit{is strictly decreasing in }$a$ \textit{for fixed }$R\in(0,1)$.

\textbf{Proof of Step 1:} By Lemma~\ref{lemma:universal_acceptance_cutoff}(ii), for $a>1$ receivers are universally obedient at $(Q,a,\sigma)$ if and only if $N(R;Q,a,\sigma)\geq 0$ for all $R\in(0,1)$; by continuity this is equivalent to the same condition on $[0,1]$. Since $q_g$ and $q_b$ depend only on $(Q,\sigma)$, only the terms $R^a$ and $(1-R)^a$ in $N$ depend on $a$. Hence
\begin{align*}
    \frac{\partial N(R;Q,a,\sigma)}{\partial a}\ =\ Q(q_b+Q)(1-R)^a\ln(1-R)+Q(q_g+Q)R^a\ln R\ <\ 0\quad\text{for }R\in(0,1),
\end{align*}
because the prefactors are strictly positive while $\ln R<0$ and $\ln(1-R)<0$.

\medskip
\textbf{Step 2: } \textit{Suppose} $a_2>a_1>1$. \textit{Then} $\bar Q(a_2,\sigma)\leq\bar Q(a_1,\sigma)$.

\textbf{ Proof of Step 2:} Take $a_2>a_1>1$ and any $Q\in\mathcal{S}(a_2,\sigma)$; by Lemma~\ref{lemma:universal_acceptance_cutoff}, equivalently any admissible $Q\in[0,\bar Q(a_2,\sigma)]$. Then $N(R;Q,a_2,\sigma)\geq 0$ for all $R\in(0,1)$. By Step~1, $N(R;Q,a_1,\sigma)>N(R;Q,a_2,\sigma)\geq 0$ for every $R\in(0,1)$. At the auxiliary boundary points $R=0$ and $R=1$, the values $N(0;Q,a,\sigma)\ =\ q_b(q_g+Q)$ and $N(1;Q,a,\sigma)\ =\ q_g(q_b+Q)$ (computed in Step~6 of the proof of Lemma~\ref{lemma:universal_acceptance_cutoff}) are independent of $a$ and strictly positive, so $N(\,\cdot\,;Q,a_1,\sigma)\geq 0$ on $[0,1]$. Hence $Q\in\mathcal{S}(a_1,\sigma)$, which gives $\bar Q(a_2,\sigma)\leq\bar Q(a_1,\sigma)$.

\medskip
\textbf{Step 3: } \textit{Suppose} $a_2>a_1>1$. \textit{Then} $\bar Q(a_2,\sigma) < \bar Q(a_1,\sigma)$.

\textbf{ Proof of Step 3:} 
Suppose, toward a contradiction, that $\bar Q(a_1,\sigma)=\bar Q(a_2,\sigma)=:\bar Q$. By Lemma~\ref{lemma:universal_acceptance_cutoff}(ii), $\bar Q<1/2$, and by the relative closedness of $\mathcal{S}(a_2,\sigma)$ in the admissible domain $[0,1/2)$ established in Step~7 of the previous proof, $\bar Q\in\mathcal{S}(a_2,\sigma)$. Hence $N(R;\bar Q,a_2,\sigma)\geq 0$ for all $R\in[0,1]$. By Step~1, for every $R\in(0,1)$, $N(R;\bar Q,a_1,\sigma)>N(R;\bar Q,a_2,\sigma)\geq 0$. At the auxiliary boundary points, $ N(0;\bar Q,a_1,\sigma)\ =\ q_b(q_g+\bar Q)$ and $N(1;\bar Q,a_1,\sigma)\ =\ q_g(q_b+\bar Q)$, both of which are strictly positive since $\bar Q<1/2$ implies $q_g,q_b>0$. Therefore, $m\ :=\ \min_{R\in[0,1]}N(R;\bar Q,a_1,\sigma)\ >\ 0.$ Choose any $\delta>0$ with $\bar Q+\delta<1/2$. The function $N$ is uniformly continuous on the compact set $[0,1]\times[\bar Q,\bar Q+\delta]$ (with $a_1$ and $\sigma$ fixed), so there exists $\varepsilon\in(0,\delta]$ such that
\begin{align*}
    \bigl|N(R;Q,a_1,\sigma)-N(R;\bar Q,a_1,\sigma)\bigr|\ <\ m\qquad\text{for all }R\in[0,1]\text{ and all }Q\in[\bar Q,\bar Q+\varepsilon].
\end{align*}
For such $Q$, $N(R;Q,a_1,\sigma)>0$ for all $R\in[0,1]$, so receivers are universally obedient at $(Q,a_1,\sigma)$. This contradicts the definition of $\bar Q(a_1,\sigma)$ as $\sup\mathcal{S}(a_1,\sigma)$. We conclude $\bar Q(a_2,\sigma)<\bar Q(a_1,\sigma)$.

\medskip
\textbf{Step 4:} $\bar Q(a,\sigma)$ \textit{is strictly increasing in }$a$ \textit{on} $(0,1)$.

\textbf{Proof of Step 4:}
By Lemma~\ref{lemma:universal_acceptance_cutoff}(iii), for $a<1$ universal acceptance at $(Q,a,\sigma)$ is equivalent to $\tilde N(R;Q,a,\sigma)\geq 0$ for all $R\in(0,1)$, where
\begin{align*}
    \tilde N(R;Q,a,\sigma)\ :=\ q_gq_b-Q^2+Q(q_b+Q)(1-R^a)+Q(q_g+Q)\bigl(1-(1-R)^a\bigr).
\end{align*}
For fixed $R\in(0,1)$, only the terms $R^a$ and $(1-R)^a$ depend on $a$, so
\begin{align*}
    \frac{\partial \tilde N(R;Q,a,\sigma)}{\partial a}\ =\ -Q(q_b+Q)R^a\ln R-Q(q_g+Q)(1-R)^a\ln(1-R)\ >\ 0,
\end{align*}
since $\ln R<0$ and $\ln(1-R)<0$. Hence, if $0<a_1<a_2<1$ and $Q\in\mathcal{S}(a_1,\sigma)$, then $\tilde N(R;Q,a_1,\sigma)\geq 0$ for all $R\in(0,1)$, and Step~1's strict inequality yields $\tilde N(R;Q,a_2,\sigma)>\tilde N(R;Q,a_1,\sigma)\geq 0$. The boundary values $\tilde N(0;Q,a,\sigma)\ =\ q_b(q_g+Q)$ and $\tilde N(1;Q,a,\sigma)\ =\ q_g(q_b+Q)$ are independent of $a$ and strictly positive, so $\tilde N(\,\cdot\,;Q,a_2,\sigma)\geq 0$ on $[0,1]$, giving $Q\in\mathcal{S}(a_2,\sigma)$. Thus $\bar Q(a_1,\sigma)\leq\bar Q(a_2,\sigma)$. The strict-inequality argument from Step~3 carries over verbatim with $\tilde N$ in place of $N$ (using $\partial\tilde N/\partial a>0$ instead of $\partial N/\partial a<0$, and the relative closedness of $\mathcal{S}(a_1,\sigma)$). From this, Step 4 follows.
\end{proof}

Thus the permissible prevalence of controversial products is largest at the symmetric benchmark $a=1$ and shrinks as the population becomes more polarized in either direction.


    \end{document}